\newcounter{rule}
\renewcommand{\infer}[5][]{%
    \def\therule{#2}%
    \refstepcounter{rule}%
    \label{rule:#3}%
    \inferrule*[lab={#2}, right={#1}, vcenter]{#4}{#5}%
}
\newenvironment{customthm}[1]
  {\innercustomthm}
  {\endinnercustomthm}
\newcommand*{\eg}{\textit{e.g.}}
\newcommand*{\Eg}{\textit{E.g.}}
\newcommand*{\ie}{\textit{i.e.,}}
\newcommand*{\cf}{\textit{cf.}}
\newcommand*{\etal}{\textit{et~al.}}
\newcommand*{\vv}{\textit{vice~versa}}
\newcommand*{\wrt}{w.r.t.}
\definecolor{colabred}{RGB}{242, 60, 60}
\definecolor{type}{RGB}{0, 51, 102}
\newcommand{\type}[1]{{\color{type} #1}}
\let\oldGamma\Gamma
\renewcommand{\Gamma}{{\color{type} \oldGamma}}
\newcommand{\Gammap}{{\color{type} \oldGamma^\prime}}
\let\oldDelta\Delta
\renewcommand{\Delta}{{\color{type} \oldDelta}}
\newcommand{\Deltab}[1]{{\color{type} \oldDelta_{#1}}}
\newcommand{\Deltap}{{\color{type} \oldDelta^\prime}}
\let\oldBeta\Theta
\renewcommand{\beta}{{\color{type} \oldBeta}}
\newcommand{\betap}{{\color{type} \oldBeta^\prime}}
\let\oldTheta\Theta
\renewcommand{\Theta}{{\color{type} \oldTheta}}
\newcommand{\Thetap}{{\color{type} \oldTheta^\prime}}
\newcommand{\Thetab}[1]{{\color{type} \oldTheta_{#1}}}
\newcommand{\magpi}{MAG\ensuremath{\pi}}
\newcommand{\tab}{\ \ \ \ }
\newcommand{\then}{\, .\,}
\newcommand{\net}{\mathcal{N}}
\newcommand{\buf}{\mathcal{B}}
\newcommand{\M}{\mathcal{M}}
\newcommand{\R}{\mathcal{R}}
\newcommand{\F}{\mathcal{F}}
\newcommand{\tl}{\triangleleft}
\newcommand{\cons}{\cdot}
\let\oldAnd\&
\renewcommand{\&}{\mathbin{\oldAnd}}
\renewcommand{\|}{\; ||\;}
\newcommand{\pPar}{\; |\;}
\newcommand{\tPar}{\color{type}\; |\;}
\renewcommand{\mid}{\ \bigm\vert\ }
\renewcommand{\o}{\oplus}
\newcommand{\0}{\bm 0} 
\newcommand{\clock}{\ensuremath{\,{\scriptstyle\VarClock}}}
\newcommand{\tEnd}{\textbf{\color{type}\textsf{end}}}
\newcommand{\role}[1]{\textsf{\color{colabred} \textbf{#1}}}
\newcommand{\msgLabel}[1]{\textsf{#1}}
\newcommand{\Msg}[4]{\langle #1 \rightarrow #2 , #3\langle #4 \rangle \rangle}
\newcommand{\ctxs}[3]{#1 \; ; \; #2 \; ; \; #3}
\newcommand{\basic}[1]{\color{type}\texttt{#1}}
\newcommand{\aOut}[3]{\type{#1{\o}#2{:}#3}}
\newcommand{\aCom}[3]{\type{#1{,}#2{:}#3}}
\newcommand{\aTime}[1]{\type{\clock}\role{#1}}
\newcommand{\redS}[1]{\ensuremath{\xrightarrow{#1}}}
\newcommand{\redP}{\rightarrow}
\newcommand{\redN}[1]{\longrightarrow_{#1}}
\newcommand{\dom}{\textsf{dom}}
\newcommand{\roles}{\textsf{roles}}
\newcommand{\safe}{\varphi}
\newcommand{\pEnd}{\textsf{end}}
\newcommand{\tterm}{\textsf{tt}}
\newcommand{\df}{\textsf{df}}
\newcommand{\term}{\textsf{term}}
\newcommand{\p}{\role{p}}
\newcommand{\q}{\role{q}}
\newcommand{\m}{\msgLabel{m}}
\renewcommand{\l}{\msgLabel{l}}
\newcommand{\St}{\type{S}}
\newcommand{\Rt}{\type{R}}
\newcommand{\ping}{\msgLabel{ping}}
\newcommand{\pong}{\msgLabel{pong}}
\newcommand{\ok}{\msgLabel{ok}}
\newcommand{\ko}{\msgLabel{ko}}
\newcommand{\rr}{\role{r}}
\newcommand{\cc}{\role{c}}
\newcommand{\s}{\role{s}}
\newcommand{\xmark}{{\color{colabred} \ding{55}}}%
\begin{document}
\title{\magpi{!}: The Role of Replication in Typing Failure-Prone Communication}
\titlerunning{\magpi{!}}
%
\author{Matthew Alan Le Brun \and
Ornela Dardha}
\authorrunning{M. A. Le Brun and O. Dardha}
\institute{University of Glasgow, Glasgow, UK\\
\email{\{matthewalan.lebrun,ornela.dardha\}@glasgow.ac.uk}}
\maketitle              
\begin{abstract}

\magpi\ is a Multiparty, Asynchronous and Generalised $\pi$-calculus that introduces timeouts into session types as a means of reasoning about failure-prone communication. 
Its type system guarantees that all possible message-loss is handled by timeout branches. 
In this work, we argue that the previous is unnecessarily strict. 
We present \magpi{!}, an extension serving as the first introduction of replication into Multiparty Session Types (MPST). 
Replication is a standard $\pi$-calculus construct used to model infinitely available servers. 
We lift this construct to type-level, and show that it simplifies specification of distributed client-server interactions. 
We prove properties relevant to generalised MPST: subject reduction, session fidelity and process property verification.

\keywords{Multiparty Session Types \and Failure \and Replication.}
\end{abstract}


\section{The Tale of the MAG(pie/$\pi$)}

The magpie is a bird with deep ties to British folklore. 
The first known mention of their counting for fortune telling dates back to 1780, where John Brand writes what is thought to be one of the original versions of the magpie rhyme~\cite{brand1841observations}:
\begin{center}
    ``\textit{One for sorrow, Two for mirth, Three for a funeral, And four for a birth.}''
\end{center}
We can imagine that the natural reaction of a person who spots a solitary magpie is to scan the surrounding area for its companion.
Alas, if no one is immediately visible, the person desperately waits---hoping a second magpie comes their way.
But how long should one wait? 
The reality is that it is \emph{impossible} to know the difference between \emph{no magpie} and a magpie that has \emph{not yet arrived}.
To computer scientists, this is a well known \emph{impossibility result}~\cite{DBLP:conf/sosp/AkkoyunluEH75}.
In the study of \emph{distributed systems} and \emph{fault tolerance}, mechanisms must be employed to approximate the impossibility result of determining whether a message has been \emph{lost} or \emph{delayed}---\eg\ by using a \emph{timeout}.
Hence, the computer scientist who spots a lonely magpie knows to only wait some fixed amount of time before \emph{assuming} that no other magpie is coming and accepting their sorrowful faith.
This philosophy is the core principle of the process calculus \magpi~\cite{DBLP:conf/esop/BrunD23}, a language designed to model communication failures (via \emph{message loss}) with a generic type system aiming to provide configurable runtime guarantees.

\magpi\ is a Multiparty, Asynchronous and Generalised $\pi$-calculus, modelling distributed communication over $n$-participant \emph{sessions}. 
Its key features include non-deterministic failure injection into the runtime of a program, asynchronous communication via \emph{bag} buffers (allowing for total message reordering), and a generic type system capable of providing guarantees of runtime properties via \emph{session types}.
Session types~\cite{DBLP:conf/esop/HondaVK98,DBLP:journals/iandc/Vasconcelos12,DBLP:journals/iandc/DardhaGS17} are \emph{behavioural type systems} allowing for formal specification of communication protocols---their main benefit being that they provide correctness guarantees on both protocol design and implementation.
\emph{Multiparty session types} (MPST)~\cite{DBLP:conf/popl/HondaYC08,DBLP:conf/concur/BettiniCDLDY08,DBLP:journals/pacmpl/ScalasY19} are a branch of session type theory that aims to support protocols involving \emph{any} number of participants with interleaving communication.
\magpi\ builds upon a generalised form of MPST~\cite{DBLP:journals/pacmpl/ScalasY19,DBLP:conf/concur/BarwellSY022}, where protocols are defined by a collection of \emph{local types}---the communication patterns of individual participants' perspectives---which should be exhaustively checked (\eg\ via model checking) to determine any properties they observe.
Novelties of \magpi\ stem from how it embraces the impossibility result of distinguishing between dropped or delayed messages; its language and type system use non-deterministic timeouts to model the \emph{assumption} of failures.
The type system guarantees that \emph{all} failure-prone communication is handled by a timeout branch.
In this work, we argue that the previous approach can, in some scenarios, be unnecessarily strict---resulting in needlessly more complex protocols.
Some configurations may wish to leave the handling of failures up to senders, as opposed to recipients; these usually take the form of \emph{client-server} interactions where servers are designed to remain infinitely available.
\Eg\ if a request to a web-server were to drop, it is the client's responsibility to re-issue that request.
We present an extension to \magpi\ that better models infinitely available servers and simplifies failure-handling for client-server interactions.

In the $\pi$-calculus~\cite{DBLP:books/daglib/0004377}, a standard construct often used for representing infinite behaviour is that of \emph{replication}.
A replicated process is one which can be informally described as \emph{infinitely available}.
Naturally, the use of replicated processes lends itself well to the modelling of client-server interactions.
We demonstrate how the use of replication in \magpi\ can, not only better model infinitely available servers, but also simplify their protocols by relaxing the requirement of failure-handling branches from \emph{every} receive to only \emph{linear} receives.

\begin{example}[Type-level replication]\label{ex:motivation}
We evolve the motivating example presented in~\cite[Ex. 1]{DBLP:conf/esop/BrunD23}, the \emph{ping} protocol.
Consider three participants: client \cc,  server \s, and result channel \rr.
Communication between \cc\ and \rr\ is reliable; whereas with \s\ is \emph{unreliable}.
The session types for a three-attempt ping in \magpi{!} are:
{\small \[\begin{array}{l l}
        \St_\rr &= \type{\&\{\cc : \ok \then \tEnd,\ \cc : \ko \then \tEnd\}} \\
        \St_\cc &= \type{\o\, \s : \ping \then\!\! \&\!\! \left\{\begin{array}{l}
            \s : \pong \then \o \rr : \ok \then \tEnd, \\
            \clock \then\! \o \s : \ping \then\!\! \&\!\! \left\{\begin{array}{l}
                {\s : \pong \then \o \rr : \ok \then \tEnd}, \\
                \clock \then\! \o \s : \ping \then\!\! \&\!\! \left\{\begin{array}{l}
                    {\s : \pong \then\!\! \o \rr : \ok \then \tEnd}, \\
                    {\clock \then\! \o \rr : \ko \then \tEnd}
                \end{array}\right.
            \end{array}\right.
        \end{array}\right.}\\
        \St_\s &= \type{{!}\cc : \ping \then \o \cc : \pong \then \tEnd}
    \end{array}
\]}

\noindent
Client \cc\ sends a message with label \ping\ to server \s\ (\type{$\o\, \s : \ping$}) and waits for a \pong\ response (\type{$\& \s : \pong$}).
If successful, an \ok\ message is sent to results role \rr\ and the session is terminated for the client (\tEnd).
Since communication with the server is \emph{unreliable}, receipt of the \pong\ message is not guaranteed, and must be handled by a \emph{timeout} branch \type{$\clock$}.
The client attempts to reach the server 3 times---if all attempts fail, it sends a \ko\ message to \rr.
The result role \rr\ waits for either of the reliable responses from \cc, thus no timeout is defined.
Server \s\ is defined as the replicated receive \type{${!}\cc : \ping \then \o \cc : \pong \then \tEnd$}, denoting its constant availability to receive a \ping\ request and send a \pong\ response.
We highlight the absence of a failure-handling timeout branch in \type{$\St_\s$}; the server does not need to change its behaviour if a client request fails.
Furthermore, if the \pong\ reply fails, the server remains available to handle any number of retries from the client.
Thus, the use of replication has offloaded the handling of failures entirely onto the client-side, has made the protocol more modular (since the type for \s\ is now agnostic of a client's retry limit), and is simpler \wrt\ to the original specification in \magpi\ (a full comparison is made in \cref{app:simplify}).

\end{example}

\subsubsection{Contributions.} 
Concretely, our contributions are as follows:
\begin{enumerate}
    \item \textbf{\magpi{!} Language}:
    We present \magpi{!} (\cref{sec:lang}), an extension of \magpi\ that does away with recursion in favour of replication as a better means of modelling client-server interactions.

    \item \textbf{\magpi{!} Types}:
    We lift replication to type-level in \cref{sec:types}.
    To the best of our knowledge, this work serves as the \emph{first introduction} of replication into MPST.
    We improve upon the theory of \magpi\ and show how three type contexts (\emph{unrestricted}, \emph{linear} and \emph{affine}) can be used to type---and \emph{simplify}---failure-prone communication in client-server interactions.

    \item \textbf{\magpi{!} Meta-Theory}:
    \Cref{sec:meta} expounds upon the meta-theory of our type system.
    We prove \emph{subjection reduction} and \emph{session fidelity}, and demonstrate how they can be used for \emph{property verification}.
    \magpi{!} provides a \emph{failure handling guarantee}, ensuring all failure-prone communication is handled by a timeout branch---a responsibility which servers offload to clients.
\end{enumerate}
In \cref{sec:conc} we conclude and give an account of related and future work.

\subsubsection{On delegation and language simplification.}
This work builds upon a \emph{subset} of \magpi~\cite{DBLP:conf/esop/BrunD23} as our language only considers communication over a \textit{single session}.
Reasons for this are: 
\begin{enumerate*}[label=(\textit{\roman*})]
    \item to simplify notation for better readability due to limited space; and
    \item to remove session fidelity assumptions.
\end{enumerate*}
On the latter, generalised MPST theory assumes communication over a single session to prove \emph{session fidelity} (a.k.a. protocol compliance)~\cite[Def. 5.3]{DBLP:journals/pacmpl/ScalasY19}.
This is to remove deadlocks that can occur due to incorrect interleaving of multiple sessions.
Effectively, the language subset we consider syntactically abides by the assumptions of session fidelity by assuming all communication happens over a single session and by removing delegation.
We foresee no issues with extending \magpi{!}\ to multiple sessions, although this will only improve the number of \emph{safe} protocols that can be expressed and has \emph{no effect} on verification of other properties.
Lastly, replication in \magpi{!}\ is a \emph{top-level} construct only.
This simplifies our type-system at the cost of sacrificing expressivity of nested replication.
The type system can still express meaningful examples (\eg\ load balancers), and we intend to explore hidden and nested replication in future work.

\section{Bird Songs}\label{sec:lang}

We present \magpi{!}, an extension of \magpi\ that replaces recursion with replicated processes as its preferred means of reasoning about infinite behaviour.
Programs in \magpi{!} represent distributed networks, consisting of concurrent and parallel processes running on machines connected over some \emph{failure-prone} medium.
We discuss how networks of various topologies are defined in \cref{sec:topology}.
\Cref{sec:processes} details the syntax and semantics of processes.

\subsection{Topology}\label{sec:topology}

Distributed protocols typically consist of a number of participants (or \emph{roles}) representing physically separated devices, communicating over a \emph{failure-prone} network.
We model such a setting by associating processes to uniquely identifiable roles, which communicate asynchronously through a \emph{bag buffer} allowing for \emph{total message reordering}.
Roles are related through a notion of \emph{reliability}, modelling physical locations of processes---\ie\ reliable roles are ones that live on the same physical device and thus are not susceptible to communication errors.
A formal account of networks, buffers and reliability is given below.

\subsubsection{Networks.}

A program in \magpi{!} models some distributed network $\net$.
These networks consist of a parallel composition of processes, each representing specific \emph{roles} in the network.
The formal description of a network is given by \cref{def:network}.

\begin{definition}[Networks]\label{def:network}
    A network $\net$ is given by the following grammar:
    \[\net ::= \p \tl \mathcal{P} \mid \net \| \net \mid \buf\]
    where $\buf$ is a message buffer; $\mathcal{P}$ is the process instruction; and \p\ is a role name.
\end{definition}

\noindent
A \emph{process} $\p \;\tl\; \mathcal{P}$ consists of a uniquely identifying role name \p, and process instructions $\mathcal{P}$.
It is key to note that all processes, \ie\ participants, of a network are syntactically defined---thus, \magpi{!} assumes a finite network size where all participants are \emph{statically} known.
The $\|$ constructor denotes \emph{parallel composition} of processes within a network, and $\buf$ is its message buffer.

\subsubsection{Buffers.}

\magpi{!} models asynchrony through a \emph{bag buffer} (semantics discussed in \cref{sec:processes}).
The buffer, \cref{def:buffer}, serves two purposes.
Firstly, it allows for non-blocking (\emph{fire and forget}) sends by acting as an intermediary where messages wait until recipients are ready to consume them.
Second, and important to distributed communication, is that it models messages \emph{in transit} over the network and is thus the point-of-failure in our system.

\begin{definition}[Buffers]\label{def:buffer}
    A message $\M$ is defined as \textnormal{$\M ::= \langle \role{p}\rightarrow\role{q}, \m\langle \tilde{v} \rangle \rangle$}, \ie\ a tuple identifying the source and destination of the message \textnormal{($\role{p}\rightarrow\role{q}$)}, along with a message label and payload contents \textnormal{($\m\langle \tilde{v} \rangle$)}.
    A buffer $\buf$ is a multiset of messages $\M$.
    Concatenating a message $\M$ with a buffer $\buf$, written $\buf \cons \M$ corresponds to the multiset sum of $\buf + \{\M\}$.
\end{definition}

\subsubsection{Reliability.}

A network is initialised with a reliability relation $\R$ (\cref{def:reliability}),
defining roles which may communicate sans failure.
All communication outwith the reliability relation is considered failure-prone; 
this may be used to simulate physical topologies, or to study a protocol at various degrees of reliability.

\begin{definition}[Reliability]\label{def:reliability}
    Given a network $\net$, and set of roles $\rho$ acting in $\net$,
    the reliability relation $\R$ is a subset of (or equal to) $\{\{\p,\q\} : \p,\q \in \rho \,\land\, \p \neq \q\}$.
    We write $\net :: \R$ to denote a network $\net$ governed by reliability relation $\R$.
    We use shorthand $\net :: \F$ to denote a fully reliable network, and $\net :: \emptyset$ to denote a fully unreliable network. 
\end{definition}

\begin{example}[Load Balancer: Network]\label{ex:network}
    Consider a load balancer network with server \role s, workers \role{w$_1$}, \role{w$_2$}, and client \role c.
    Assuming server-worker communication to be reliable, the network may be configured as below:
    \[
    \role{s} \tl \mathcal{P_\role{s}} \|  \role{w$_1$} \tl \mathcal{P_\role{w$_1$}} \| \role{w$_2$} \tl \mathcal{P_\role{w$_2$}}
    \| \role{c} \tl \mathcal{P_\role{c}} \| \buf :: \{\{ \role s, \role{w$_1$}\}, \{ \role s, \role{w$_2$}\}\}
    \]
\end{example}

\subsection{Processes}\label{sec:processes}

\begin{definition}[Process syntax]
    The syntax for defining process instructions $\mathcal{P}$ is given by the following grammar:
    {\normalfont \begin{align*}
        \mathcal{P} &::= {!}_{i \in I} \role{p$_i$} : \m_i (\tilde{x_i}) \then P_i \mid \dbox{$\mathcal{P} \pPar \mathcal{P}$} \mid P\\
        P    &::= \0 \mid \&_{i \in I} \role{p$_i$} : \m_i (\tilde{x_i}) \then P_i\,[, \clock \then P^\prime] \mid \o\, \p : \m \langle \tilde{c} \rangle \then P\\
        c    &::= x \mid v
        \tab\tab 
        v ::= \text{basic values}
    \end{align*}}

    \noindent
    All branching terms assume $I \neq \emptyset$ and all couples $\role{p$_i$} : \m_i$ to be pairwise distinct. 
    Receiving constructs act as binders on their payloads.
    
\end{definition}

\noindent
A process $\mathcal{P}$ can either be a replicated server or a linear process.
\emph{Replicated receive} ${!}_{i \in I} \role{p$_i$} : \m_i (\tilde{x_i}) \then P_i$ denotes a server constantly available to receive any of a set of messages from roles \role{p$_i$} with labels $\m_i$.
The received payload is bound to $\tilde{x_i}$ before pulling out a copy of $P_i$ to run in parallel with the server.
\emph{Parallel composition} $\pPar$ is a \dbox{runtime} only construct at the process-level.
It is used to denote composition of linear continuations pulled out of a replicated receive.
Linear processes ($P, Q, \dots$) consist of:
\begin{enumerate*}[label=(\textit{\roman*})]
    \item the \emph{empty process} $\0$;
    \item \emph{linear receives} $\&_{i \in I} \role{p$_i$} : \m_i (\tilde{x_i}) \then P_i\,[, \clock \then P^\prime]$, where a role waits for one of a set of messages from some other roles \role{p$_i$} with labels $\m_i$, binding the received payload to $\tilde{x_i}$ before proceeding according to $P_i$; 
    \item an optional \emph{nondeterministic timeout branch} $[, \clock \then P^\prime]$ attached to linear receives to handle possible failure of messages, instructing the process to proceed according to $P^\prime$; and
    \item \emph{linear sends} $\o\, \p : \m \langle \tilde{c} \rangle \then P$ which sends a message towards $\p$ with label $\m$ and payload $\tilde{c}$ before continuing according to $P$.
\end{enumerate*}
A payload $c$ is either a \emph{variable} ($x,y,\dots$) or some assumed basic value (integers, reals, strings, \dots).
We omit conditional branching constructs such as if-then-else and case statements as they are routine and orthogonal to our work (we assume them in examples).

\begin{figure}[t]
    \fbox{Process semantics}
    {\normalfont \begin{mathpar}
        \infer{P-Send}{p-send}
        {}
        {\p \tl \q \o \m \langle \tilde{v} \rangle \then P \| \buf \redN{\R}  \p \tl P \| \buf \cons \Msg{\p}{\q}{\m}{\tilde{v}}}

        \infer{P-Recv}{p-recv}
        {\exists k \in I $ s.t. $\role{r$_k$} = \q  \text{ and } \m_k = \l \text{ and } |\tilde{y}_k| = |\tilde{v}|}
        {\p \tl \&_{i \in I} \role{r$_i$} : \m_i (\tilde{y}_i) \then P_i \,[, \clock \then  P^\prime] \| \buf \cons \Msg{\q}{\p}{\l}{\tilde{v}} \redN{\R} \p \tl P_k \{^{\tilde{v}}/_{\tilde{y}_k}\} \| \buf}
        
        \infer{P-$!$Recv}{p-bang}
        {\mathcal{P} = {!}_{i \in I} \role{r$_i$} : \m_i (\tilde{y}_i) \then P_i \\ \exists k \in I $ s.t. $\role{r$_k$} = \q  \text{ and } \m_k = \l  \text{ and } |\tilde{y}_k| = |\tilde{v}|}
        {\p \tl \mathcal{P} \| \buf \cons \Msg{\q}{\p}{\l}{\tilde{v}} \redN{\R} \p \tl \mathcal{P} \pPar P_k \{^{\tilde{v}}/_{\tilde{y}_k}\} \| \buf}

        \infer{N-$\|$}{n-par}
        {\net \redP \net^\prime}
        {\net \| \net^{\prime\prime} \redP \net^\prime \| \net^{\prime\prime}}

        \infer{P-$\pPar$}{p-par}
        {\net \| \p \tl \mathcal{P} \redP \net^\prime \| \p \tl \mathcal{P}^\prime}
        {\net \| \p \tl \mathcal{P} \pPar \mathcal{P}^{\prime\prime} \redP \net^\prime \| \p \tl \mathcal{P}^\prime \pPar \mathcal{P}^{\prime\prime}}
    \end{mathpar}}

    \fbox{Failure semantics}
    {\normalfont \begin{mathpar}
        \infer{F-Drop}{f-drop}
        { \{\p,\q\} \not\in \R }
        {\buf \cons \Msg{\p}{\q}{\m}{\tilde{v}} \redN{\R} \buf}

        \infer{F-Timeout}{f-timeout}
        {}
        {\q \tl \&_{i \in I} \role{r$_i$} : \m_i (\tilde{y}_i) \then P_i, \clock \then P^\prime \redN{\R} \q \tl P^\prime}
    \end{mathpar}}
\caption{Network semantics.}
\label{fig:semantics}
\end{figure}

\begin{definition}[Network Semantics]\label{def:semantics}
    Reduction on networks is parametric on a reliability relation $\R$. 
    The reduction relation $\redN{\R}$ is inductively defined by the rules listed in \cref{fig:semantics}, up-to congruence (\cref{app:cong}).
\end{definition}

Network dynamics (\cref{fig:semantics}) are divided into \emph{process} and \emph{failure} semantics.
A process sends a message via rule \cref{rule:p-send}, which places the message in the network buffer and advances the sending process to its continuation.
Conversely, processes receive messages (rule \cref{rule:p-recv}) by consuming a message from the buffer, advancing the process to its continuation and substituting bound payloads with the received data.
In a similar manner, servers may consume messages from the buffer using rule \cref{rule:p-bang}; instead of advancing the process, a copy of its continuation is \emph{pulled out} and placed in parallel.
This allows servers to concurrently handle and receive client requests.

Message failure is modelled through rule \cref{rule:f-drop}.
We recall that buffers model messages \emph{in transit}, thus this rule may---\emph{at any time}---drop a message from the buffer if it is unreliable.
It is key to note that failure in these semantics is \emph{nondeterministic}.
A client may consume a message before it is dropped, representing a successful transmission; or the message may be dropped before consumed, representing the failure case.
Reduction of \emph{timeout branches} is also nondeterministic since it is impossible to distinguish between \emph{dropped messages} (\emph{no magpie}) and \emph{delayed messages} (the magpie that has \emph{not yet arrived}).
Therefore, rule \cref{rule:f-timeout} can \emph{at any time} reduce a waiting process to its timeout branch, modelling either the handling of message failure or an incorrect assumption of failure (\ie\ message delay).

\begin{example}[Load Balancer: Processes]\label{ex:processes}
    We present the processes of our load balancer. 
    An output role \role o, which is reliable \wrt\ the client, has been added.
    {\small \begin{align*}
        \role{s} &\tl {!} \role{c} : \msgLabel{req}(x) \then \texttt{case flip() of } \left\{\begin{array}{l}
            \texttt{heads} \rightarrow \o \role{w$_1$} : \msgLabel{req}\langle x\rangle \then \0 \\
            \texttt{tails} \rightarrow \o \role{w$_2$} : \msgLabel{req}\langle x\rangle \then \0
        \end{array} \right.  \\
        \role{w$_1$} &\tl {!} \role{s} : \msgLabel{req}(d) \then \o \role{c} : \msgLabel{ans}\langle f(d)\rangle \then \0\\
        \role{w$_2$} &\tl {!} \role{s} : \msgLabel{req}(d) \then \o \role{c} : \msgLabel{ans}\langle f(d)\rangle \then \0\\
        \role{c} &\tl \o \role{s} : \msgLabel{req}\langle 42 \rangle \then \& \left\{\begin{array}{l}
            \role{w$_1$} : \msgLabel{ans}(y) \then \o \role{o} : \msgLabel{output}\langle y \rangle \then \0 \\
            \role{w$_2$} : \msgLabel{ans}(y) \then \o \role{o} : \msgLabel{output}\langle y \rangle \then \0 \\
            \clock \then \o \role{o} : \msgLabel{err}\langle \text{``Request timed out''} \rangle \then \0
        \end{array} \right. \\
        \role{o} &\tl \& \{\role{c} : \msgLabel{output}(out) \then \0, \role{c} : \msgLabel{err}(msg) \then \0\}
    \end{align*}}
\end{example}

\begin{example}[Interactions with Failure: Processes]\label{ex:proc-failures}
    Now we demonstrate interactions unique to our language which result from the use of timeouts as imperfect failure detectors.
    Consider the following network snippet $\net_f :: \emptyset$:
    \[
        \p \tl \o\, \q : \m\langle 42 \rangle \then P\ \|\ \q \tl \& \{\p : \m(x) \then P^\prime,\ \clock \then P^{\prime\prime}\}\ \|\ \{\Msg{\p}{\q}{\m}{\text{``Life is''}}\}
    \]

    \noindent
    These processes denote communication between two roles (\p\ and \q), where a message labelled \m\ with the string ``Life is'' has already been sent, and a second message \emph{also} labelled \m\ is to be sent with payload $42$.
    There are \emph{four} possible immediate reduction steps for this network: 
    \begin{enumerate*}[label=(\textit{\roman*})]
        \item role \q\ consumes the message in the buffer via \cref{rule:p-recv} (the intended behaviour);
        \item role \p\ places message $\Msg{\p}{\q}{\m}{42}$ in the buffer via \cref{rule:p-send}, this may possibly result in message reordering due to the bag buffer semantics;\label{item:f-2}
        \item message $\Msg{\p}{\q}{\m}{\text{``Life is''}}$ is dropped from the buffer via \cref{rule:f-drop}, then \q\ may either correctly assume failure through a timeout, or if the sender is quick enough the message $\Msg{\p}{\q}{\m}{42}$ could still be received in its place; and \label{item:f-3}
        \item role \q\ can incorrectly assume a failure and timeout via \cref{rule:f-timeout} even though message $\Msg{\p}{\q}{\m}{\text{``Life is''}}$ is in the buffer.\label{item:f-4}
    \end{enumerate*}
    It is not difficult to see how \cref{item:f-2,item:f-3,item:f-4} may lead to errors.
    Our types and meta-theory mitigate the occurrence of these possibly unsafe networks by enforcing a safe design of protocols.
\end{example}

\section{Harmonisation}\label{sec:types}

We now present the \emph{multiparty}, \emph{asynchronous}, and \emph{generalised} type system for \magpi{!}.
To the best of our knowledge, this is the first work to introduce \emph{replication} and \emph{parallel composition} for local types in MPST.
We show how these constructs lend themselves well to typing distributed client-server interactions.

\subsection{Types}

The syntax for \magpi{!} types are given in \cref{def:types}.
Our type system does away with tail-recursive binders (as is standard in MPST), instead opting for a \emph{replicated receive} type.
The syntax distinguishes between different classes of types.
Namely, we present \emph{replicated-}, \emph{session-}, \emph{message-} and \emph{basic-types}---each of which are used differently by the type contexts (\cref{def:contexts}).

\begin{definition}[Types]\label{def:types}
    The syntax for \magpi{!} types is given by:
    {\normalfont \begin{align*}
        \type{R} &::= \type{{!}_{i \in I} \role{p$_i$} : \m_i (\tilde{B_i}) \then S_i} \\
        \type{S} &::= \type{\o_{i \in I} \role{p$_i$} : \m_i (\tilde{B_i}) \then S_i} \mid \type{\&_{i \in I} \role{p$_i$} : \m_i (\tilde{B_i}) \then S_i\, [,\clock \then S^\prime]} \mid \dbox{\type{$S \tPar S$}} \mid \tEnd\\
        \type{M} &::= \type{(\p\rightarrow\q, \m(\tilde{B}))} \\
        \type{B} &::= \basic{Int, Real, String,\dots} \text{ (basic types)}
    \end{align*}}

    \noindent
    Branching constructs assume $I \neq \emptyset$ and couples $\role{p$_i$} : \m_i$ to be pairwise distinct.
    Replicated types $\type{R}$ assume a pool of labels distinct from their continuations.
\end{definition}

A \emph{replicated type} \type{$R$} defines the protocol of a server.
Type $\type{{!}_{i \in I}\role{p$_i$} : \m_i (\tilde{B_i}) \then S_i}$ denotes the receipt of requests labelled $\m_i$ from \role{p$_i$} carrying payload types \type{$\tilde{B_i}$} having continuation types \type{$S_i$}.
Replicated types never appear guarded and always have linear continuations.

\emph{Session types} \type{$S$} describe the protocol of a \emph{linear} process.
The \emph{selection} and \emph{branching} types (\type{$\o$} and \type{$\&$}) detail possible sends and receives, indicating direction and content of payloads.
Branching types may optionally include a failure-handling \emph{timeout branch} $\type{\clock \then S}$, where \type{$S$} details the protocol to employ upon assuming a failure.
As in processes, types also have a notion of \dbox{runtime} only \emph{parallel composition}, identifying the protocols of continuations pulled out of a replicated receive.
The \tEnd\ type denotes termination of a party's protocol.

\emph{Message types} \type{$M$} are used to type messages in a buffer. 
They record the direction of communication, as well as the chosen branching label and types of its payload.
Lastly, \type{$B$} represents a range of assumed \emph{basic types}.

\begin{figure}[t]
    \fbox{Context update}
    \begin{mathpar}
        \infer{}{+-0}
        { }
        {\Delta + \emptyset = \Delta}

        \infer[\textnormal{if $\p \not \in \dom(\Deltab{1})$}]{}{+-1}
        {\Deltab{1} + \Deltab{2} = \Deltab{3}}
        {\Deltab{1} + \Deltab{2},\p:\St  = \Deltab{3},\p:\St}

        \infer[]{}{+-2}
        {\Deltab{1} + \Deltab{2} = \Deltab{3}}
        {\Deltab{1},\p:\type{S_1} + \Deltab{2},\p:\type{S_2}  = \Deltab{3},\p:\type{S_1 \tPar S_2}}
    \end{mathpar}

    \fbox{Context splitting}
    \begin{mathpar}
        \infer{}{split-0}
        { }
        {\emptyset = \emptyset \cons \emptyset}

        \infer{}{split-l}
        {\Delta = \Deltab{1} \cons \Deltab{2}}
        {\Deltab,\p:\St = \Deltab{1},\p:\St \cons \Deltab{2}}

        \infer{}{split-r}
        {\Delta = \Deltab{1} \cons \Deltab{2}}
        {\Deltab,\p:\St = \Deltab{1} \cons \Deltab{2},\p:\St}

        \infer{}{split-par}
        {\Delta = \Deltab{1} \cons \Deltab{2}}
        {\Delta,\p:\type{S_1\tPar S_2} = \Deltab{1},\p:\type{S_1} \cons \Deltab{2},\p:\type{S_2}}
    \end{mathpar}
    
    \caption{Context addition and splitting.}
    \label{fig:context-operations}
\end{figure}

\begin{definition}[Contexts]\label{def:contexts}
    Context $\Gamma$ is \textbf{unrestricted} and maps variables to basic types and roles to replicated types.
    Context $\Delta$ is \textbf{linear} and maps roles to session types.
    Context $\Theta$ is \textbf{affine} and holds a multiset of message types $\type{M}$.
    \[ 
        \Gamma ::= \emptyset \mid \p : \type{R}, \Gamma \mid x : \type{B}, \Gamma \tab\tab
        \Delta ::= \emptyset \mid \p : \type{S}, \Delta \tab\tab
        \Theta ::= \{\type{M_1},\dots,\type{M_n}\}
    \]

    \noindent
    \textbf{Updating} and \textbf{splitting} operations are defined for $\Delta$ by the rules in \cref{fig:context-operations}.
    Context \textbf{composition} $\Gamma,\Gammap$ (resp. $\Delta,\Deltap$) is defined iff $\textnormal{\dom}(\Gamma) \cap \textnormal{\dom}(\Gammap) = \emptyset$ (resp. $\textnormal{\dom}(\Delta) \cap \textnormal{\dom}(\Deltap) = \emptyset$).
\end{definition}

\Cref{fig:context-operations} defines two relations on $\Delta$.
\emph{Context addition} joins two contexts by performing a union on their contents (in the case  that there are no conflicts in their domains).
If their domains are not unique, then the types are placed in parallel, indicating a role employing multiple active session types (this is explained in more detail after introducing context reduction \cf\ \cref{def:ctx-red}).
Context \emph{splitting} extracts a piece of a larger context.
Notably, types placed in parallel may be split using this operation; in other cases splitting functions similar to context composition.

\begin{figure}[t]
    \begin{mathpar}
        \infer{$\Delta$-\type{$\clock$}}{D-time}
        { }
        {\ctxs{\Gamma}{ \Delta \cons \p : \type{\&_{i \in I} \role{q$_i$} : \m_i (\tilde{B_i}) \then S_i, \clock \then S^\prime}}{\beta} \redS{\aTime{p}} \ctxs{\Gamma}{ \Delta + \p : \type{S^\prime}}{\beta}}
        
        \infer{$\Delta$-\type{$\o$}}{D-send}
        {\type{S} = \type{\o_{i \in I} \role{q$_i$} : \m_i (\tilde{B_i}) \then S_i} \\ k \in I}
        {\ctxs{\Gamma}{ \Delta \cons \p : \type{S}}{\beta} \redS{\aOut{\p}{\role{q$_k$}}{\m_k}} \ctxs{\Gamma}{ \Delta + \p : \type{S_k}}{\beta \cons \type{(\p \rightarrow \role{q$_k$}, \m_k(\tilde{B_k}))}}}

        \infer{$\Delta$-\type{C}}{D-com}
        {\type{S} = \type{\&_{i \in I} \role{q$_i$} : \m_i (\tilde{B_i}) \then S_i [, \clock \then S^\prime]} \\ k \in I}
        {\ctxs{\Gamma}{ \Delta \cons \p : \type{S}}{\beta \cons \type{(\role{q$_k$} \rightarrow \role{p}, \m_k(\tilde{B_k}))}} \redS{\aCom{\role{q$_k$}}{\p}{\m_k}} \ctxs{\Gamma}{ \Delta + \p:\type{S_k}}{\beta} }

        \infer{$\Gamma$-\type{!C}}{G-bang}
        {\type{R} = \type{{!}_{i \in I} \role{q$_i$} : \m_i (\tilde{B_i}) \then S_i} \\ k \in I}
        {\ctxs{\Gamma, \p : \type{R} }{ \Delta }{\beta \cons \type{(\role{q$_k$} \rightarrow \role{p}, \m_k(\tilde{B_k}))}} \redS{\aCom{\role{q$_k$}}{\p}{\m_k}} \ctxs{\Gamma, \p : \type{R}}{\Delta + \p : \type{S_k}}{\beta} }
    \end{mathpar}
    \caption{Type LTS}
    \label{fig:type-lts}
\end{figure}
\begin{definition}[Context Reduction]\label{def:ctx-red}
    An action $\alpha$ is given by
    {\normalfont\[\alpha ::= \aOut{\p}{\q}{\m} \mid \aCom{\p}{\q}{\m} \mid \aTime{p}\]}

    \noindent
    read as (left to right) \textbf{output}, \textbf{communication}, and \textbf{timeout}.
    Context \textbf{transition} $\redS{\alpha}$ is defined by the Labelled Transition System (LTS) in \cref{fig:type-lts}.
    Context \textbf{reduction} \textnormal{$\ctxs{\Gamma}{\Delta}{\Theta} \redP \ctxs{\Gamma}{\Deltap}{\Thetap}$} is defined iff \textnormal{$\ctxs{\Gamma}{\Delta}{\Theta} \redS{\alpha} \ctxs{\Gamma}{\Deltap}{\Thetap}$} for some \textnormal{$\alpha$}.
    We write $\ctxs{\Gamma}{\Delta}{\Theta} \redP$ iff $\exists {\Deltap},{\Thetap} \text{ s.t. } \ctxs{\Gamma}{\Delta}{\Theta} \redP \ctxs{\Gamma}{\Deltap}{\Thetap}$; and $\redP^*$ for its transitive and reflexive closure.
\end{definition}

Context reduction (\cref{def:ctx-red}) models type-level communication by means of the LTS in \cref{fig:type-lts}.
Transition \cref{rule:D-time} allows a role \p\ with a defined timeout to transition to the timeout continuation by firing a $\aTime{p}$ action.
Transition \cref{rule:D-send} is a synchronisation action between a selection type and the type buffer $\Theta$.
Effectively, a role with a send type can transition to its continuation by firing any of the paths indicated in the selection ($\aOut{\p}{\role{q$_k$}}{\m_k}$) and adding the message into the buffer context.
On the receiving end, a role with a branch type can consume a message from the type buffer to model a communication action via transition \cref{rule:D-com}.
Communication with replicated servers is handled seperately by transition \cref{rule:G-bang}.
This rule allows a communication action to be fired when a replicated type in $\Gamma$ can receive a message in the buffer.
This transition has no effect on $\Gamma$ (since it is an unrestricted context) and instead updates the linear context $\Delta$ with the continuation of the replicated receive.
This is why types require runtime parallel composition, and context updating and splitting operations (\cref{fig:context-operations}), as multiple requests may be made to a replicated receive.

\subsection{Typing Rules}

Protocols defined in \magpi{!} types are used in type judgements (\cref{def:typing-rules}) to check whether network implementations conform to their specifications.

\begin{definition}[Typing Judgement]\label{def:typing-rules}
    Type contexts are used in judgements as $\ctxs{\Gamma}{\Delta}{\Theta} \vdash \net$, inductively defined by the rules in \cref{fig:typing-rules}.
    To improve readability, empty type contexts are omitted from rules.
\end{definition}

\begin{definition}[End Predicate]\label{def:pEnd}
    A context $\Delta$ is \tEnd-typed, by:
    {\normalfont\[
        \infer{}{end}
        {\forall i \in 1..n :  \type{\St_i} = \tEnd}
        {\pEnd(\role{p$_1$} : \type{S_1} \cdot \ldots \cdot \role{p$_n$} : \type{S_n})}
    \]}
\end{definition}

\begin{figure}[t]
    \begin{mathpar}
        \infer{T-\St}{t-s}
        { }
        {\Gamma \;; \p : \St \vdash \p : \St}

        \infer{T-Var}{t-var}
        { \Gamma(x) = \type{B} }
        {\Gamma \vdash x : \type{B}}

        \infer{T-Val}{t-val}
        { v \in \type{B} }
        {\Gamma \vdash v : \type{B}}

        \infer{T-0}{t-0}
        {\pEnd(\Delta)}
        {{\Gamma}\;; \Delta \vdash 0}

        \infer{T-\type{$\o$}}{t-send}
        { \Gamma \;; \Delta \vdash \p : \type{\o_{i \in I} \role{q$_i$} : \m_i({B_i}_1,\dots,{B_i}_n) \then S_i} \\ k \in I \\ \forall j \in 1..n : \Gamma \vdash c_j : \type{{B_k}_j} \\ {\Gamma}\;;\p : \type{S_k} \vdash P }
        {{\Gamma}\;;\Delta \vdash \p \tl \role{q$_k$} \o \m_k \langle c_1,\dots,c_n \rangle \then P} 

        \infer{T-\type{$\oldAnd$}}{t-recv}
        { \Gamma \;; \Delta \vdash \p : \type{\&_{i \in I} \role{q$_i$} : \m_i ({B_i}_1,\dots,{B_i}_n) \then S_i [, \clock \then S^\prime]} \\ \forall i \in I : \Gamma,{y_i}_1 : \type{{B_i}_1} , \dots, {y_i}_n : \type{{B_i}_n} \ ; \ \p : \type{S_i} \vdash P_i \\ [\Gamma \;; \p : \type{S^\prime} \vdash P^\prime]}
        { \Gamma \;; \Delta \vdash \p \tl \&_{i \in I} \role{q$_i$} : \m_i ({y_i}_1,\dots,{y_i}_n) \then P_i [, \clock \then P^\prime]  }

        \infer{T-\type{!}}{t-bang}
        { \Gamma(\p) = \type{{!}_{i \in I} \role{q$_i$} : \m_i ({B_i}_1,\dots,{B_i}_n) \then S_i} \\ \forall i \in I : \Gamma,{y_i}_1 : \type{{B_i}_1} , \dots, {y_i}_n : \type{{B_i}_n} \ ;\  \p : \type{S_i} \vdash P_i} 
        { \Gamma \vdash \p \tl {!}_{i \in I} \role{q$_i$} : \m_i ({y_i}_1,\dots,{y_i}_n) \then P_i } 

        \infer{T-{$\|_1$}}{t-par1}
        {\ctxs{\Gamma}{\Delta}{\emptyset} \vdash \net \\ \ctxs{\Gamma}{\emptyset}{\beta} \vdash \buf}
        {\ctxs{\Gamma}{\Delta}{\beta} \vdash \net \| \buf}

        \infer{T-{$\|_2$}}{t-par2}
        {\Gamma \,; \Deltab{1} \vdash \net_1 \\ \Gamma \;; \Deltab{2} \vdash \net_2}
        {\Gamma \,; \Deltab{1} , \Deltab{2} \vdash \net_1 \| \net_2}

        \infer{T-{$\pPar$}}{t-par3}
        {\Gamma \,; \Deltab{1} \vdash \p \tl \mathcal{P}_1 \\ \Gamma \;; \Deltab{2} \vdash \p \tl \mathcal{P}_2}
        {\Gamma \,; \Deltab{1} \cons \Deltab{2} \vdash \p \tl \mathcal{P}_1 \pPar \mathcal{P}_2}

        \infer{T-Empty}{t-empty}
        {  }
        { \ctxs{\Gamma}{\emptyset}{\beta} \vdash \emptyset}

        \infer{T-Buf}{t-buf}
        { \forall j \in 1..n : \Gamma \vdash v_j : \type{B_j} \\ \ctxs{\Gamma}{\emptyset}{\beta} \vdash \buf }
        { \ctxs{\Gamma}{\emptyset}{\beta \cons \type{(\p \rightarrow \q, \m(B_1,\dots,B_n))}} \vdash \buf \cons \langle \p \rightarrow \q, \m\langle v_1,\dots,v_n \rangle\rangle}
    \end{mathpar}
    \caption{Typing rules.}
    \label{fig:typing-rules}
\end{figure}

Typing rules \cref{rule:t-s}, \cref{rule:t-var}, \cref{rule:t-val} are auxiliary judgements typing linear roles, variables and values.
A role \p\ of type \St\ is typed by a linear context containing exactly a mapping of \p\ to \St; 
variables are typed to a basic type if that mapping is held by $\Gamma$; and
values are typed to a basic type if they belong to their sets.
The empty process $\0$ is typed by \cref{rule:t-0} if the linear context is \tEnd-typed (\cref{def:pEnd}), \ie\ $\Delta$ only contains roles mapped to \tEnd.

The send process $\p \tl \role{q$_k$} \o \m_k \langle c_1,\dots,c_n \rangle \then P$ is well typed by \cref{rule:t-send} if: 
$\Delta$ can map \p\ to a selection type containing the path chosen by the process; $\Gamma$ verifies all payloads with their types indicated in the session type; and the continuation type can check the continuation process.

The receive process $\p \tl \&_{i \in I} \role{q$_i$} : \m_i ({y_i}_1,\dots,{y_i}_n) \then P_i [, \clock \then P^\prime]$ is well typed by \cref{rule:t-recv} if:
$\Delta$ maps \p\ to a branch with all the same paths contained in $I$; the payloads and continuation types of every path in the branch can type all process continuations $P_i$; and if a timeout process $P^\prime$ is defined, then it must be typed under a timeout branch in the session type.

Replicated receive $\p \tl {!}_{i \in I} \role{q$_i$} : \m_i ({y_i}_1,\dots,{y_i}_n) \then P_i $ is typed using \cref{rule:t-bang} in a similar manner to \cref{rule:t-recv}; the type of \p\ instead lives in the unrestricted context.

Network composition is typed by \cref{rule:t-par1} and \cref{rule:t-par2}.
The former separates the linear context to be used on processes and the buffer context to be used on the network buffer; the latter splits context domains to type different roles in the network.
Process-level composition is typed via \cref{rule:t-par3} which utilises the context splitting operation (\cref{fig:context-operations}) to separate parallel session types.

Network buffers are typed by repeated applications of \cref{rule:t-buf}, which removes messages from the buffer one at a time if they match a message type in the type buffer.
The empty buffer is typed under \cref{rule:t-empty}, allowing for possible leftover types in $\Theta$. 
It is key to note that the buffer context is \emph{affine}, as any message that gets dropped at runtime will result in an unused message type.

\begin{example}[Interactions with Failure: Types]\label{ex:type-failures}
    Due to the generalised nature of the type system, the type judgement alone is not enough to detect the errors that may occur in $\net_f$.
    This is because the type-system does not provide \emph{syntactic} guarantees, but rather should be used in conjunction with exhaustive verification techniques post protocol design (this is standard in generalised MPST~\cite{DBLP:journals/pacmpl/ScalasY19,DBLP:conf/esop/BrunD23,DBLP:conf/concur/BarwellSY022}).
    In fact, network $\net_f$ can be typed under the following contexts:
    \[
        \ctxs{\Gamma}{\p : \type{\o\, \q : \m(\mathbb{N}) \then S}, \q : \type{\& \{\p : \m(\basic{String}) \then S^\prime,\ \clock \then S^{\prime\prime}\}}}{\Theta \cons \type{(\p \rightarrow \q, \m(\basic{String}))}}
    \]

    \noindent
    for some $\Gamma, \Theta, \type{S}, \type{S^\prime}, \type{S^{\prime\prime}}$ assuming that $P$, $P^\prime$ and $P^{\prime\prime}$ are well typed using $\type{S}$, $\type{S^\prime}$ and $\type{S^{\prime\prime}}$ respectively.
    Note that $\Gamma$ and $\Theta$ can be non-empty since the former is unrestricted and the latter is affine. 
    In contrast, the linear context must be exactly as stated above.
    We now need a way to determine this protocol as unsafe. 
\end{example}

\section{Songs About Songs}\label{sec:meta}

Unlike most session type theories, \emph{generalised} MPST do not syntactically guarantee any properties on the processes they type.
Rather, they provide a framework for \emph{exhaustively checking} runtime properties on the type context, from which process-level properties may be inferred.
This seemingly unconventional approach to session types was discovered to be \emph{more expressive} than its syntactic counterpart \wrt\ the amount of well-typed programs it can capture~\cite{DBLP:journals/pacmpl/ScalasY19}.
Furthermore, its generalised nature allows for fine-tuning based on specific requirements of its applications.
Informally, generalisation of the type system works by proving the meta-theory parametric of a safety property; \ie\ all theorems proved and presented assume that the type contexts are \emph{safe} (\cref{sec:safety}).
With this assumption we present our main results in \cref{sec:properties}.

\subsection{Type Safety}\label{sec:safety}

The technical definition of \emph{safety} refers to the \emph{minimal requirements} on types to guarantee \emph{subjection reduction} (\cf~\cref{sec:properties}, \cref{thm:sr}). 
But what does safety even mean for a distributed network with message loss, delays and reordering?
It is impossible for our type system to adopt standard notions of safety which may guarantee properties such as \emph{no unexpected messages} or \emph{correct ordering of messages}, since the failures experienced at runtime can mitigate such guarantees.
Hence, the minimal guarantee of safety (\cref{def:prop-safe}) in \magpi{!} ensures that: 
\begin{enumerate}
    \item timeout branches are always (and only) defined for failure-prone communication between \emph{linear} processes; and
    \item if a message eventually reaches its destination, then the expected types of the payload from the recipient should match the data carried on the message.
\end{enumerate}

\begin{definition}[Safety Property]\label{def:prop-safe}
    $\safe_\R$ is a safety property on contexts iff:
    {\normalfont\begin{mathpar}\mprset{flushleft}
        \infer{$\safe$-r$_1$}{safe-r1}{}
        {\safe_\R(\ctxs{\Gamma}{\Delta \cons \p : \type{\&_{i \in I} \role{q$_i$} : \m_i (\tilde{B_i}) \then S_i}}{\beta}) \textit{  implies  } \forall i \in I : \{\role{q$_i$}, \p\} \in \R  }

        \infer{$\safe$-r$_2$}{safe-r2}{}
        {\safe_\R(\ctxs{\Gamma}{\Delta \cons \p : \type{\&_{i \in I} \role{q$_i$} : \m_i (\tilde{B_i}) \then S_i, \clock \then S^\prime}}{\beta}) \textit{  implies  } \exists k \in I : \{\role{q$_k$}, \p\} \not\in \R  }
        
        \infer{$\safe$-c}{safe-c}{}
        {\safe_\R(\ctxs{\Gamma}{\Delta \cons \p : \type{\&_{i \in I} \role{q$_i$} : \m_i (\tilde{B_i}) \then S_i [, \clock \then S^\prime]}}{\beta \cons \type{(\role{q$_k$} \rightarrow \p, \m_k(\tilde{B^\prime}))}}) \\\\ \tab\tab\textit{and } k \in I \textit{  implies  } |\type{\tilde{B_k}}| = |\type{\tilde{B^\prime}}| \textit{ and } \forall j \in 1..|\type{\tilde{B_k}}| : \type{{B_k}_j} = \type{{B^\prime}_j} }

        \infer{$\safe$-{!}c}{safe-!c}{}
        {\safe_\R(\ctxs{\Gamma, \p : \type{{!}_{i \in I} \role{q$_i$} : \m_i (\tilde{B_i}) \then S_i}}{\Delta}{\beta \cons \type{(\role{q$_k$} \rightarrow \p, \m_k(\tilde{B^\prime}))}}) \\\\ \tab\tab\textit{and } k \in I \textit{  implies  } |\type{\tilde{B_k}}| = |\type{\tilde{B^\prime}}| \textit{ and } \forall j \in 1..|\type{\tilde{B_k}}| : \type{{B_k}_j} = \type{{B^\prime}_j} }

        \infer{$\safe$-$\redP$}{safe-r}{}
        {\forall \Deltap : \safe_\R(\ctxs{\Gamma}{\Delta}{\beta}) \,\textit{and}\, \ctxs{\Gamma}{\Delta}{\beta} \redP \ctxs{\Gamma}{\Deltap}{\betap} \textit{  implies  } \safe_\R(\ctxs{\Gamma}{\Deltap}{\betap})}
    \end{mathpar}}
\end{definition}

Conditions \cref{rule:safe-r1} and \cref{rule:safe-r2} ensure that timeouts are \emph{only} omitted (resp. defined) when communication is reliable (resp. unreliable).
\cref{rule:safe-c} and \cref{rule:safe-!c} require payload types to match for any communication; note that no message is ever incorrectly delivered to a linear channel instead of a replicated (and \vv) because we assume that message labels for replicated receives are not reused in their continuations.
The last condition, \cref{rule:safe-r}, requires \emph{all} possible reductions of safe contexts to also be safe.

\begin{example}[Interactions with Failure: Safety]\label{ex:safety-failures}
    The type contexts presented in \cref{ex:type-failures} do not abide by the conditions of $\safe_\emptyset$ and thus are not safe.
    The types do meet conditions \cref{rule:safe-r1} to \cref{rule:safe-!c}, but fail \cref{rule:safe-r}.
    We observe the following traces of the LTS:

    \begin{tikzpicture}
        \node (s)  at (0,0)                        {$\cdot$};
        \node (r1) at (3,1.5)  [anchor=west] {$\ctxs{\Gamma}{\p : \type{\o\, \q : \m(\mathbb{N}) \then S}, \q : \type{S^\prime}}{\Theta}$};
        \node (r2) at (3,0)    [anchor=west, text width=7.2cm] {$\Gamma \;;\; \p : \type{S}, \q : \type{\& \{\p : \m(\basic{String}) \then S^\prime,\ \clock \then S^{\prime\prime}\}}\;; $\\$ \Theta \cons \type{(\p \rightarrow \q, \m(\basic{String}))} \cons \type{(\p \rightarrow \q, \m(\mathbb{N}))}$};
        \node (r3) at (3,-1.5) [anchor=west] {$\ctxs{\Gamma}{\p : \type{\o\, \q : \m(\mathbb{N}) \then S}, \q : \type{S^{\prime\prime}}}{\Theta \cons \type{(\p \rightarrow \q, \m(\basic{String}))}}$};
        \draw[->] (s.east) -- (r1.west) node[midway,above,sloped] {$\aCom{\role{\p}}{\q}{\m}$};
        \draw[->] (s.east) -- (r2.west) node[midway,above,sloped] {$\aOut{\p}{\q}{\m}$};
        \draw[->] (s.east) -- (r3.west) node[midway,above,sloped] {$\aTime{p}$};
        \node (x) at (10.5,0) {\xmark};
    \end{tikzpicture}

    \noindent
    The transition over label $\aOut{\p}{\q}{\m}$ yields contexts in violation of \cref{rule:safe-c}.
    This example highlights the impact of message labels in protocol design, as reusing labels may lead to nondeterministic receipt of messages.
    However, this does not mean that messages with the same label can never be reused---it is possible for this nondeterminism to still be safe w.r.t. \cref{def:prop-safe}. 
    \Eg\ consider the types in \cref{ex:motivation}, the types reuse labels \ping\ and \pong.
    This is safe because the protocol has no dependency on receiving messages with the same label in a specific order.
\end{example}

\begin{example}[Load Balancer: Types]
    We type our load balancer using the protocol below in a judgement as $\ctxs{\role{s} \!:\! \Rt_\role{s}, \role{w$_1$} \!:\! \Rt_\role{w$_1$}, \role{w$_2$} \!:\! \Rt_\role{w$_2$}}{\role{c} \!:\! \St_\role{c}, \role{o} \!:\! \St_\role{o}}{\emptyset} \vdash \net \| \emptyset$ where $\net$ contains the processes from \cref{ex:processes}.
    The protocol observes the safety property \wrt\ the reliability relation defined in \cref{ex:network}.

    {\small \begin{align*}
        \Rt_\role{s} &= \type{{!}\role{c} : \msgLabel{req}(\mathbb{N}) \then \o \left\{\begin{array}{l}
            \role{w$_1$} : \msgLabel{req}(\mathbb{N}) \then \tEnd \\
            \role{w$_2$} : \msgLabel{req}(\mathbb{N}) \then \tEnd
        \end{array} \right.}  \\
        \Rt_\role{w$_1$} &= \type{{!} \role{s} : \msgLabel{req}(\mathbb{N}) \then \o \role{c} : \msgLabel{ans}(\texttt{Real}) \then \tEnd}\\
        \Rt_\role{w$_2$} &= \type{{!} \role{s} : \msgLabel{req}(\mathbb{N}) \then \o \role{c} : \msgLabel{ans}(\texttt{Real}) \then \tEnd}\\
        \St_\role{c} &= \type{\o \role{s} : \msgLabel{req}(\mathbb{N}) \then \& \left\{\begin{array}{l}
            \role{w$_1$} : \msgLabel{ans}(\texttt{Real}) \then \o \role{o} : \msgLabel{output}(\texttt{Real}) \then \tEnd \\
            \role{w$_2$} : \msgLabel{ans}(\texttt{Real}) \then \o \role{o} : \msgLabel{output}(\texttt{Real}) \then \tEnd \\
            \clock \then \o \role{o} : \msgLabel{err}(\texttt{String}) \then \tEnd
        \end{array} \right.} \\
        \St_\role{o} &= \type{\& \{\role{c} : \msgLabel{output}(\texttt{Real}) \then \tEnd, \role{c} : \msgLabel{err}(\texttt{String}) \then \tEnd\}}
    \end{align*}}
\end{example}

\subsection{Type Properties}\label{sec:properties}

Our main results are presented below (proof details are given in \cref{app:proofs}).
\emph{Subject reduction} (\cref{thm:sr}) states that any process typed under a safe context remains well-typed and safe after reduction (even in the presence of failures).
From this we obtain \cref{cor:fhg}, stating that timeout branches are only omitted from linear receives if communication is reliable; hence certifying that all processes typed by safe contexts guarantee that no \emph{linear} failure-prone communication goes unhandled.
A key contribution of our work is that this corollary is relaxed to \emph{linear} processes instead of \emph{all} processes, since we do not wish for replicated servers to handle dropped client requests.

\begin{theorem}[Subject Reduction]\label{thm:sr}
    If $\ctxs{\Gamma}{\Delta}{\Theta} \vdash \net$ with $\safe_\R(\ctxs{\Gamma}{\Delta}{\Theta})$ and $\net \redP_\R \net^\prime$, then $\exists \Deltap,\Thetap$ s.t. $\ctxs{\Gamma}{\Delta}{\Theta}\redP^*\ctxs{\Gamma}{\Deltap}{\Thetap}$ and $\ctxs{\Gamma}{\Deltap}{\Thetap} \vdash \net^\prime$ with $\safe_\R(\ctxs{\Gamma}{\Deltap}{\Thetap})$. 
\end{theorem}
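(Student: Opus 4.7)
The plan is to prove the theorem by induction on the derivation of the reduction $\net \redN{\R} \net^\prime$. For each rule of the network semantics (\cref{fig:semantics}) I would invert the typing derivation $\ctxs{\Gamma}{\Delta}{\Theta} \vdash \net$ to extract the session type driving the reducing subterm, exhibit a matching context reduction sequence via the LTS in \cref{fig:type-lts}, and finally re-derive the typing of $\net^\prime$. Safety of the resulting context follows immediately from \cref{rule:safe-r} whenever at least one context reduction is taken; the reflexive-transitive closure $\redP^*$ in the statement is what accommodates the cases that require zero type-level steps.

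The straightforward cases are \cref{rule:p-send}, \cref{rule:p-recv} and \cref{rule:f-timeout}. For \cref{rule:p-send}, inversion of \cref{rule:t-send} yields a selection type for $\p$ whose chosen branch fires \cref{rule:D-send}, producing exactly the buffer-context extension that \cref{rule:t-buf} needs to re-type the enlarged buffer. For \cref{rule:p-recv}, inversion of \cref{rule:t-recv} together with the typing of the consumed message in $\Theta$ matches the premises of \cref{rule:D-com}; a routine \emph{substitution lemma} (payloads of type $\type{B}$ may replace variables of type $\type{B}$) then re-establishes the typing of the continuation, where safety condition \cref{rule:safe-c} guarantees that payload types actually agree. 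For \cref{rule:f-timeout}, safety condition \cref{rule:safe-r2} guarantees that the relevant branch type carries a timeout continuation, so \cref{rule:D-time} fires directly and the timeout-branch premise of \cref{rule:t-recv} supplies the typing of $P^\prime$. The congruence rules \cref{rule:n-par} and \cref{rule:p-par} are handled by the induction hypothesis on the reducing subterm and reassembled via \cref{rule:t-par2} or \cref{rule:t-par3}, up-to the structural congruence deferred to the appendix.

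The subtle cases are \cref{rule:f-drop} and \cref{rule:p-bang}. Rule \cref{rule:f-drop} corresponds to \emph{zero} context reductions: the orphaned message type in $\Theta$ remains untouched, but this is admissible because $\Theta$ is affine and \cref{rule:t-empty} tolerates arbitrary leftover message types; safety is then trivial because the context has not changed. Rule \cref{rule:p-bang} is the most delicate one: inversion of \cref{rule:t-bang} places the replicated type $\type{R}$ persistently in $\Gamma$, while the spawned copy $P_k\{\tilde{v}/\tilde{y}_k\}$ must be typed under a fresh instance of the continuation $\type{S_k}$. I would fire \cref{rule:G-bang}, which leaves $\Gamma$ untouched, consumes the message type from $\Theta$ (justified here by safety condition \cref{rule:safe-!c}), and extends $\Delta$ with $\p : \type{S_k}$ via the context-update operation; when $\p$ already carries an active linear type, the resulting parallel composition at the type level is then decomposed by \cref{rule:split-par} to fit \cref{rule:t-par3}, which is precisely the rule used to type the resulting $\p \tl \mathcal{P} \pPar P_k\{\tilde{v}/\tilde{y}_k\}$, combining the re-application of \cref{rule:t-bang} for the surviving server with the derivation of $P_k$ under $\p : \type{S_k}$.

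I expect the main obstacle to be exactly the bookkeeping for \cref{rule:p-bang} in scenarios where the same replicated server has already spawned several concurrent handlers before this step, so that $\p$ already appears in $\Delta$ with a parallel-composed linear type from prior applications of \cref{rule:+-2}. One must verify that the interplay between \cref{rule:G-bang}, the update rule \cref{rule:+-2}, and the splitting rule \cref{rule:split-par} produces exactly the decomposition demanded by iterated applications of \cref{rule:t-par3}, so that every active handler obtains a matching linear slice while the server's $\type{R}$ entry in $\Gamma$ remains invariant across arbitrarily many concurrent requests. Once this invariant is pinned down as a side lemma, the remaining cases follow by standard session-typing techniques.
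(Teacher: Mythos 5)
Your proposal follows essentially the same route as the paper's own (sketched) proof: induction on the derivation of $\net \redN{\R} \net^\prime$, inversion of the typing rules to match each network step with zero or more context reductions, safety re-established by iterating \cref{rule:safe-r}, and the \cref{rule:f-drop} case absorbed by the affine buffer context $\Theta$ tolerating an orphaned message type. Your treatment is in fact more detailed than the paper's sketch (notably the \cref{rule:p-bang} bookkeeping via \cref{rule:G-bang}, \cref{rule:+-2} and \cref{rule:split-par}); the only quibble is that for \cref{rule:f-timeout} the existence of the timeout branch in the type already follows from inversion of \cref{rule:t-recv}, so invoking \cref{rule:safe-r2} there is unnecessary, though harmless.
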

\begin{corollary}[Failure Handling Guarantee]\label{cor:fhg}
    If $\ctxs{\Gamma}{\Delta}{\Theta} \vdash \net$ with $\safe_\R(\ctxs{\Gamma}{\Delta}{\Theta})$ and $\net \redP_\R^* \p \tl \&_{i \in I} \role{q$_i$} : \m_i (\tilde{c_i}) \then P_i \pPar \mathcal{P} \| \net^{\prime}$, then $\forall i \in I : \{\p, \role{q$_i$}\} \in \R$.
\end{corollary}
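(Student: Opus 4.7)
The plan is to deduce the corollary from Subject Reduction (\cref{thm:sr}) via inversion on the typing rules and a single application of safety condition \cref{rule:safe-r1}. The intuition is that a linear receive written \emph{without} the $[, \clock \then P^\prime]$ branch can only be well-typed against a branching session type that itself omits the timeout, and the safety predicate forbids such a type in the linear context unless every source role is reliable with respect to the receiver.

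I would proceed in three steps. First, by induction on the length of $\net \redP^*_\R \net^\prime$, iteratively invoking \cref{thm:sr}, produce a context $\ctxs{\Gamma}{\Deltap}{\Thetap}$ such that $\ctxs{\Gamma}{\Delta}{\Theta} \redP^* \ctxs{\Gamma}{\Deltap}{\Thetap}$, $\ctxs{\Gamma}{\Deltap}{\Thetap} \vdash \net^\prime$, and $\safe_\R(\ctxs{\Gamma}{\Deltap}{\Thetap})$---safety propagates along $\redP^*$ by clause \cref{rule:safe-r}. Second, invert the typing derivation of $\net^\prime$: uses of \cref{rule:t-par1} and \cref{rule:t-par2} peel the buffer and the other roles away to leave a judgement for \p's composite process under a sub-context with \p\ as its only role entry; iterated \cref{rule:t-par3} through the nesting of $\pPar$ extracts the sub-judgement $\Gamma \,;\, \p : \type{S} \vdash \p \tl \&_{i \in I} \role{q$_i$} : \m_i (\tilde{c_i}) \then P_i$; and \cref{rule:t-recv} then forces $\type{S}$ to be exactly $\type{\&_{i \in I} \role{q$_i$} : \m_i (\tilde{B_i}) \then S_i}$ with no $[, \clock \then S^\prime]$ clause, precisely because the process lacks the matching $[, \clock \then P^\prime]$. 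Third, apply \cref{rule:safe-r1} to the safe context from the first step to conclude $\forall i \in I : \{\p, \role{q$_i$}\} \in \R$.

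The main obstacle is bridging the last two steps in the presence of the type-level parallel operator $\tPar$. Because \p\ can concurrently run several linear continuations spawned from its replicated receive, the entry for \p\ in $\Deltap$ is typically not a bare branching type but a parallel composition $\type{S_1 \tPar \cdots \tPar S_n}$---one component is the branching type exposed by inversion, while the others may be arbitrary. \cref{rule:safe-r1}, however, matches on a context of the shape $\Delta \cdot \p : \type{\&_{i \in I} \ldots}$. Bridging this gap requires an auxiliary splitting-preservation lemma: whenever $\safe_\R(\ctxs{\Gamma}{\Delta}{\Theta})$ and $\Delta = \Deltab{1} \cdot \Deltab{2}$ via the rules of \cref{fig:context-operations}, then $\safe_\R(\ctxs{\Gamma}{\Deltab{1}}{\Theta})$ (and symmetrically for $\Deltab{2}$). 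With this lemma in hand, \cref{rule:split-par} extracts the branching component for \p\ into a stand-alone sub-context on which \cref{rule:safe-r1} applies directly. I would prove the lemma by induction on the splitting derivation, exploiting the fact that type-level transitions on one parallel component (rules \cref{rule:D-time}, \cref{rule:D-send}, \cref{rule:D-com}, \cref{rule:G-bang}) are independent of the others, so the closure condition \cref{rule:safe-r} carries through. This splitting-preservation step is the single non-routine ingredient; once it is available, the three steps compose immediately.
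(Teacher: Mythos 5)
Your proof is correct and follows the route the paper intends: iterate \cref{thm:sr} along the reduction sequence to obtain a safe context typing the reduct, invert \cref{rule:t-par1}, \cref{rule:t-par2}, \cref{rule:t-par3} and \cref{rule:t-recv} to expose a timeout-free branching type for \p, and conclude by \cref{rule:safe-r1}. The only remark is that your auxiliary splitting-preservation lemma is unnecessary: condition \cref{rule:safe-r1} is already stated against a context of the form $\Delta \cons \p : \type{S}$ using the \emph{splitting} operator of \cref{fig:context-operations}, so \cref{rule:split-par} together with \cref{rule:split-l} directly exhibits the required decomposition of a parallel type $\type{S_1 \tPar \cdots \tPar S_n}$ without needing safety of the split sub-contexts themselves.
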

\emph{Session fidelity} (\cref{thm:sf}) states the opposite implication \wrt\ subjection reduction, \ie\ processes typed under a safe context can always match \emph{at least one} reduction available to the context.
\begin{theorem}[Session Fidelity]\label{thm:sf}
    If $\ctxs{\Gamma}{\Delta}{\Theta} \redP$ and $\ctxs{\Gamma}{\Delta}{\Theta} \vdash \net$ with $\safe_\R(\ctxs{\Gamma}{\Delta}{\Theta})$, then $ \exists \Deltap,\Thetap,\net^\prime$ s.t. $\ctxs{\Gamma}{\Delta}{\Theta} \redP \ctxs{\Gamma}{\Deltap}{\Thetap}$ and $\net \redP_\R^* \net^\prime$ and $\ctxs{\Gamma}{\Deltap}{\Thetap} \vdash \net^\prime$ with $\safe_\R(\ctxs{\Gamma}{\Deltap}{\Thetap})$. 
\end{theorem}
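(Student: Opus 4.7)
The plan is to proceed by case analysis on the LTS rule used to derive the type context transition $\ctxs{\Gamma}{\Delta}{\Theta} \redS{\alpha} \ctxs{\Gamma}{\Deltap}{\Thetap}$ guaranteed by the hypothesis $\ctxs{\Gamma}{\Delta}{\Theta} \redP$. For each of the four rules in \cref{fig:type-lts}, I would invert the typing judgement $\ctxs{\Gamma}{\Delta}{\Theta} \vdash \net$ through \cref{rule:t-par1}, \cref{rule:t-par2}, and \cref{rule:t-par3} to expose the subterm of $\net$ corresponding to the acting role, then exhibit a concrete network reduction that matches the context move. Safety preservation for the resulting context follows for free from the closure condition \cref{rule:safe-r} of \cref{def:prop-safe}, and \cref{thm:sr} (Subject Reduction) gives that $\ctxs{\Gamma}{\Deltap}{\Thetap} \vdash \net^\prime$, so the real content is building the matching step.

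The timeout and send cases are routine. For $\aTime{p}$ via \cref{rule:D-time}, inversion of \cref{rule:t-recv} locates a subterm $\p \tl \&_{i \in I} \role{q$_i$} : \m_i (\tilde{y_i}) \then P_i, \clock \then P^\prime$ in $\net$, which reduces via \cref{rule:f-timeout} to $\p \tl P^\prime$; the premise for the timeout continuation in \cref{rule:t-recv} types the result against $\type{S^\prime}$. For $\aOut{\p}{\role{q$_k$}}{\m_k}$ via \cref{rule:D-send}, inversion of \cref{rule:t-send} yields $\p \tl \role{q$_k$} \o \m_k \langle \tilde c \rangle \then P$ with each $c_j$ of basic type ${B_k}_j$; rule \cref{rule:p-send} then appends $\Msg{\p}{\role{q$_k$}}{\m_k}{\tilde v}$ to $\buf$, and \cref{rule:t-buf} types the enlarged buffer against $\Theta \cons \type{(\p \rightarrow \role{q$_k$}, \m_k(\tilde{B_k}))}$. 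The communication cases ($\aCom{\role{q$_k$}}{\p}{\m_k}$) split on whether $\p$'s type lives in $\Delta$ (rule \cref{rule:D-com}) or in $\Gamma$ (rule \cref{rule:G-bang}): in both cases the transition consumes a type $\type{(\role{q$_k$} \rightarrow \p, \m_k(\tilde{B_k}))}$ from $\Theta$, so by inverting \cref{rule:t-par1} and iterating \cref{rule:t-buf} there is a concrete message $\Msg{\role{q$_k$}}{\p}{\m_k}{\tilde v}$ in $\buf$ whose payload is typed at $\tilde{B_k}$, with arity and types aligned by \cref{rule:safe-c} or \cref{rule:safe-!c}. Rule \cref{rule:p-recv} handles the linear subcase, and a standard substitution lemma types $\p \tl P_k\{\tilde v / \tilde{y_k}\}$ against $\type{S_k}$; rule \cref{rule:p-bang} handles the replicated subcase, producing $\p \tl \mathcal{P} \pPar P_k\{\tilde v/\tilde{y_k}\}$.

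The main obstacle I anticipate is this last replicated case: one must show that the runtime parallel composition introduced by \cref{rule:p-bang} is well-typed under the update rule \cref{rule:+-2}, and that this composition plays nicely with the context splitting used by \cref{rule:t-par3}. This calls for an auxiliary lemma establishing that context addition and splitting (\cref{fig:context-operations}) are mutual inverses up to the parallel type constructor, so that the freshly spawned linear continuation can be typed independently at $\p : \type{S_k}$ while the original replicated receive retains its typing in $\Gamma$ and the residual $\mathcal{P}$ keeps its original (possibly parallel) linear type. Once this bookkeeping is in place, the transitive closure $\redP_\R^*$ in the statement absorbs any congruence rearrangements needed to surface the acting subterm from within nested parallel compositions, and the combination of \cref{thm:sr} and \cref{rule:safe-r} closes out typing and safety of $\net^\prime$.
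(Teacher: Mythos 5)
Your overall skeleton (case analysis on the four LTS rules, inversion of the typing judgement to surface the acting subterm, then Subject Reduction plus \cref{rule:safe-r} to close out typing and safety) matches the paper's proof, and your timeout and send cases are fine. But there is a genuine gap in the communication cases, and it is not the one you flag. You claim that because \cref{rule:D-com} (resp.\ \cref{rule:G-bang}) consumes a message type $\type{(\role{q$_k$}\rightarrow\p,\m_k(\tilde{B_k}))}$ from $\Theta$, inverting \cref{rule:t-par1} and iterating \cref{rule:t-buf} yields a concrete message $\Msg{\role{q$_k$}}{\p}{\m_k}{\tilde{v}}$ in $\buf$. This is false: $\Theta$ is \emph{affine}, and \cref{rule:t-empty} deliberately permits leftover message types with no corresponding message in the network buffer---that is exactly how the system accounts for messages dropped by \cref{rule:f-drop}. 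So a type-level communication enabled in $\Theta$ need not be matchable by \cref{rule:p-recv} or \cref{rule:p-bang}. Your plan, as stated, tries to match \emph{every} available context move, which is a strictly stronger (and false) claim than the theorem, whose conclusion is existential: the network only has to match \emph{some} context reduction.

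The paper's proof turns on precisely this point. For a linear receive, \cref{rule:safe-r1} guarantees that a branch without a timeout only receives over reliable pairs, so the message cannot have been dropped and your match goes through; if the branch is unreliable it carries a timeout, and when the message has been lost the network instead matches the $\aTime{p}$ action via \cref{rule:f-timeout}. For \cref{rule:G-bang} with an unreliable client, when the request has already been dropped the paper argues by safety that either the spawned continuation is \tEnd-typed (so the context move is matched by \emph{zero} network steps, using the reflexivity of $\redP_\R^*$) or there is at least one other context reduction that the network can match. Your anticipated ``main obstacle''---the interaction of context addition \cref{rule:+-2} with the splitting used by \cref{rule:t-par3}---is a real but routine bookkeeping lemma; the actual crux is the dropped-message scenario above, which your argument does not address.
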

Using this result we can verify properties other than just safety.
This is the benefit of the generalised approach to MPST, where instead of forcing protocols to abide by specific properties, types can be checked \emph{a posteriori} to determine any properties they observe.
We demonstrate for \emph{deadlock freedom} (\cref{def:df-nets}).

\begin{definition}[DF: Networks]\label{def:df-nets}
    A network $\net$ is deadlock free, written \textnormal{$\df(\net)$}, iff $\net \redP^* \net^\prime \not\redP$ implies either \begin{enumerate}
        \item $\net^\prime \equiv \0 \| \buf$; or
        \item $\net^\prime \equiv\ \net_1^{\prime} \| \cdots \| \net_n^\prime \| \buf$ s.t. \textnormal{$\forall i \in 1..n : \net_i^{\prime} = \role{p$_i$} \tl {!}_{j \in J} \role{q$_j$} : \m_j(\tilde{x_j}) \then P_j$}.
    \end{enumerate}
\end{definition}
A deadlock free network is one that only gets stuck when all processes reach $\0$, or when the only non-$\0$ processes left in the network are servers.
(Note, the buffer is allowed to be non-empty because of message delays.)
We define deadlock freedom on types in \cref{def:df-types}, stating that type contexts are deadlock free if they only get stuck when the linear context is \tEnd-typed.

\begin{definition}[DF: Types]\label{def:df-types}
    Contexts $\ctxs{\Gamma}{\Delta}{\Theta}$ are deadlock free, written \textnormal{$\df(\ctxs{\Gamma}{\Delta}{\Theta})$}, iff $\ctxs{\Gamma}{\Delta}{\Theta} \redP^* \ctxs{\Gamma}{\Deltap}{\Thetap} \not\redP$ implies \textnormal{$\pEnd(\Deltap)$}.
\end{definition}
\begin{proposition}[Property Verification: DF]\label{prop:df-ver}
    If $\ctxs{\Gamma}{\Delta}{\Theta} \vdash \net$ with $\safe_\R(\ctxs{\Gamma}{\Delta}{\Theta})$, then $\df(\ctxs{\Gamma}{\Delta}{\Theta})$ implies $\df(\net)$.
\end{proposition}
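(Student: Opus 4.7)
The plan is to transfer the hypothesis $\df(\ctxs{\Gamma}{\Delta}{\Theta})$ down to the process level by means of \cref{thm:sr} and (the contrapositive of) \cref{thm:sf}. Concretely, assume $\net \redP_\R^* \net' \not\redP$. Iteratively applying \cref{thm:sr} along this reduction sequence yields $\Deltap, \Thetap$ with $\ctxs{\Gamma}{\Delta}{\Theta} \redP^* \ctxs{\Gamma}{\Deltap}{\Thetap}$, $\ctxs{\Gamma}{\Deltap}{\Thetap} \vdash \net'$, and $\safe_\R(\ctxs{\Gamma}{\Deltap}{\Thetap})$. Next I would use \cref{thm:sf} in its contrapositive form: since every type-level reduction rule in \cref{fig:type-lts} is mirrored by a syntactically matching, strictly non-empty process reduction in \cref{fig:semantics}, $\net' \not\redP$ forces $\ctxs{\Gamma}{\Deltap}{\Thetap} \not\redP$. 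Invoking $\df(\ctxs{\Gamma}{\Delta}{\Theta})$ on the chain $\ctxs{\Gamma}{\Delta}{\Theta} \redP^* \ctxs{\Gamma}{\Deltap}{\Thetap} \not\redP$ then gives $\pEnd(\Deltap)$.

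It remains to show that $\pEnd(\Deltap)$ together with $\ctxs{\Gamma}{\Deltap}{\Thetap} \vdash \net'$ forces $\net'$ into one of the two shapes prescribed by \cref{def:df-nets}. I would argue by induction on the typing derivation, inverting \cref{rule:t-par1}, \cref{rule:t-par2}, and \cref{rule:t-par3} to distribute the end-typed linear context across the parallel components of $\net'$. A supporting lemma is required: if $\Delta = \Deltab{1} \cons \Deltab{2}$ and $\pEnd(\Delta)$, then both $\Deltab{1}$ and $\Deltab{2}$ are end-typed. This lemma also covers the parallel entry $\type{S_1 \tPar S_2}$ arising from \cref{rule:+-2}, since $\pEnd$ on such an entry can only hold when both constituents are \tEnd. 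With the distribution in place, inversion on the process typing rules under an end-typed linear context eliminates \cref{rule:t-send} and \cref{rule:t-recv} (both demand a non-\tEnd\ mapping for $\p$), leaving only \cref{rule:t-0} for linear processes and \cref{rule:t-bang} for replicated receives. Every component of $\net'$ is therefore $\0$ or a replicated receive (or, at runtime, a parallel composition thereof, which congruence collapses by absorbing $\0$ summands), yielding case~(1) when no replicated receive remains and case~(2) otherwise.

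The main obstacle I anticipate is the interplay between the runtime-only parallel composition and the $\pEnd$ predicate, specifically the distribution lemma sketched above; handling $\type{S_1 \tPar S_2}$ entries requires showing that such an entry cannot appear in a stuck, end-typed context unless both constituents are already \tEnd, which relies on the fact that $\df$ only demands $\pEnd$ at the final stuck configuration rather than at every intermediate point. A secondary subtlety is the contrapositive use of \cref{thm:sf}: one must check that the ``$\net \redP^*_\R \net'$'' in its statement is genuinely non-trivial for each reduction at the type level, i.e., that \cref{rule:D-time}, \cref{rule:D-send}, \cref{rule:D-com}, and \cref{rule:G-bang} are each mirrored by at least one strictly progressing network step, so that an irreducible network indeed entails an irreducible context.
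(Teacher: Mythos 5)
Your proposal follows essentially the same route as the paper's: use session fidelity to relate reducibility of contexts and networks, then invert the typing rules under an \tEnd-typed linear context to force $\net'$ into one of the two shapes of \cref{def:df-nets}. You are in fact more explicit than the paper on two points it glosses over: threading \cref{thm:sr} along the reduction sequence so that the stuck network $\net'$ is typed by a safe reduct of the original contexts, and the distribution of $\pEnd$ across context splitting and runtime parallel entries $\type{S_1 \tPar S_2}$ (the paper jumps straight from $\pEnd(\Delta)$ to ``the network is at most replicated receives and a buffer''). The one genuinely delicate step is the one you flag but do not close: the contrapositive use of \cref{thm:sf}. As stated, \cref{thm:sf} only yields $\net \redP_\R^* \net'$, and there is a transition that need not be mirrored by a strictly progressing network step, namely \cref{rule:G-bang} firing on a message type that persists in the affine $\Thetap$ after the corresponding network message has been removed by \cref{rule:f-drop}; so a stuck $\net'$ does not immediately entail a stuck context, and $\df$ of contexts cannot be invoked directly on $\ctxs{\Gamma}{\Deltap}{\Thetap}$. (The cases \cref{rule:D-time}, \cref{rule:D-send}, and \cref{rule:D-com} do go through, the last because \cref{rule:safe-r1} forces a timeout whenever the pending message could have been dropped.) The paper's own proof sidesteps this by arguing only the forward direction (``context reduces implies network reduces'') and is at least as sketchy there; closing the gap properly requires showing that such phantom \cref{rule:G-bang} steps preserve the invariant that the context still types $\net'$ up to \tEnd\ entries, so that $\df$ of contexts can be applied to the eventual stuck context. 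Apart from this shared loose end, your argument is sound and matches the paper's.
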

Lastly, in \cref{prop:df-ver} we state that deadlock free contexts imply deadlock freedom in the networks they type, a result which follows from \cref{thm:sf}.

\subsubsection{Decidability.}\label{sec:decidable}

Asynchronous generalised MPST are known to be undecidable in general~\cite{DBLP:journals/pacmpl/ScalasY19,DBLP:conf/esop/BrunD23}.
This stems from the fact that session types with asynchronous buffers can encode Turing machines~\cite[Theorem 2.5]{DBLP:journals/corr/abs-1211-2609}.
However, we note that this simulation relies on buffers with queue semantics and tail-recursion; whereas our type system uses bag buffers and replication.
Comparing the expressive power of recursion and replication, previous studies show that for $\pi$-calculi with communication of free names the two are equally as expressive~\cite{DBLP:journals/eatcs/Palamidessi05}; whereas without communication of free names (\eg\ CCS) recursion is strictly more expressive than replication~\cite{DBLP:journals/mscs/BusiGZ09}.
Thus, we raise the question: ``\textit{What is the expressive power of asynchronous session types with bag buffers and replication?}'', which we aim to answer in future work.

For now, we present a predicate on type contexts which can be used to determine decidable subsets of the type system.
This predicate, which we call \emph{trivially terminating} (\cref{def:tt}), is decidable and implies decidability of safety (and subsequently property verification).

\begin{definition}[Trivially Terminating]\label{def:tt}
    We say $\ctxs{\Gamma}{\Delta}{\Theta}$ are {trivially terminating}, written \textnormal{$\tterm(\ctxs{\Gamma}{\Delta}{\Theta})$}, iff \textnormal{$\forall \p \in \dom(\Gamma): \Gamma(\p) = \type{{!}_{i \in I} \role{q$_i$} : \m_i (\tilde{B}_i) \then \tilde{S}_i}$ where $\forall i \in I : \role{q$_i$} \not\in \dom(\Gamma)$}.
\end{definition}

\begin{proposition}[Decidable Subset]\label{prop:decidable}
    For any contexts, \textnormal{$\tterm(\ctxs{\Gamma}{\Delta}{\Theta})$} is decidable and \textnormal{$\tterm(\ctxs{\Gamma}{\Delta}{\Theta})$} implies checking \textnormal{$\safe_\R(\ctxs{\Gamma}{\Delta}{\Theta})$} is decidable.
\end{proposition}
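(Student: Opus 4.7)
The two claims are dispatched separately. Decidability of $\tterm(\ctxs{\Gamma}{\Delta}{\Theta})$ is a purely syntactic check: $\dom(\Gamma)$ is finite, each type $\Gamma(\p)$ has a finite number of branches, and for each branch one tests $\role{q$_i$} \not\in \dom(\Gamma)$. This runs in time polynomial in the size of $\Gamma$, so $\tterm$ is decidable.

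For the safety claim, the plan is to unfold \cref{def:prop-safe} coinductively: $\safe_\R(\ctxs{\Gamma}{\Delta}{\Theta})$ holds iff the local conditions \cref{rule:safe-r1}--\cref{rule:safe-!c} are satisfied by every context reachable from $\ctxs{\Gamma}{\Delta}{\Theta}$ under $\redP^*$. Each local condition is a finite syntactic check, so it suffices to prove the reachable set is finite and effectively enumerable; safety is then decided by exhaustive search over the reachability graph.

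The main technical lemma I would prove is that, under $\tterm$, context reduction terminates after a bounded number of steps and every reachable pair $(\Delta,\Theta)$ has bounded syntactic size. The central observation is that rule \cref{rule:G-bang} fires only on messages whose sender $\role{q$_k$}$ is listed in some server's receive pattern; by $\tterm$, $\role{q$_k$} \not\in \dom(\Gamma)$. Hence every G-bang consumes a message produced by a D-send from a non-server role. For each non-server role $\q$, the type of $\q$ in $\Delta$ can only shrink (\cref{rule:D-send}, \cref{rule:D-com}, \cref{rule:D-time} each strictly reduce its syntactic size; \cref{rule:G-bang} never spawns into non-server entries), so $\q$ fires only finitely many actions, bounded by the initial size of its type. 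Summing across all non-server roles gives a finite bound $B_1$ on the total number of G-bang firings. Each G-bang spawns into $\Delta$ a continuation of size at most $M$, the maximum continuation size across all server types in $\Gamma$; hence the total syntactic mass added to $\Delta$ via spawns is at most $M \cdot B_1$, and actions on these spawned fragments are in turn bounded by $M \cdot B_1$. Spawned continuations may themselves send towards servers, but such messages cannot trigger further G-bang under $\tterm$ (their sender lies in $\dom(\Gamma)$, contradicting the G-bang premise); they merely accumulate in $\Theta$, contributing finitely.

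Combining these bounds, both the reduction length and the size of reachable $(\Delta, \Theta)$ are bounded by a computable function of the input; since $\Gamma$ is invariant under reduction, the whole reachable state space is finite and can be constructed by breadth-first exploration of $\redP$. Testing the local conditions at each node then decides $\safe_\R$. The hard part is the bounded-activity lemma: one must rigorously rule out that server-to-server communication arising from replicated continuations perpetuates G-bang indefinitely. This step hinges critically on the role-domain restriction of $\tterm$ --- once that observation is formalised, the remaining accounting over the measures is routine bookkeeping.
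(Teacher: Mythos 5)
The paper states this proposition without providing a proof (neither in the main text nor in the appendices), so there is nothing to compare against directly; judged on its own, your argument is sound and is the natural one the authors presumably intend. The first claim is indeed a trivial finite syntactic check. For the second claim, your key observations are the right ones and they do close the argument: (i) the largest safety property unfolds to ``every $\redP^*$-reachable context satisfies the local clauses'', each of which is a finite check over the finitely many context splittings; (ii) under $\tterm$, rule \cref{rule:G-bang} can only consume messages whose sender is outside $\dom(\Gamma)$, and such messages are produced only by \cref{rule:D-send} steps of linear roles whose types strictly shrink (there is no recursion in types, so continuations are proper subterms), giving a finite bound on spawns; (iii) spawned continuations live in $\Delta$ under the server's name, so any messages they emit have a sender in $\dom(\Gamma)$ and can never re-trigger \cref{rule:G-bang} --- this is exactly the point where the role-domain restriction of $\tterm$ is essential, and you identify it correctly. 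Together with finite branching this makes the reachable context graph finite and effectively constructible, so exhaustive checking decides $\safe_\R$. The only points I would insist you spell out in a full write-up are the characterisation in (i) of membership in the largest safety property via reachability (it is standard in generalised MPST but is used implicitly), and the bound on $|\Theta|$ (total D-send firings are bounded by the initial linear mass plus the spawned mass $M\cdot B_1$), which you gesture at but do not state explicitly.
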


\section{Encore}\label{sec:conc}

Modelling of failures and distributed communication is increasingly becoming a more relevant and widely researched topic within the area of programming languages.
We highlight below some key related work, identifying the main differences \wrt\ \magpi{!}.

\emph{Affine session types}~\cite{DBLP:journals/lmcs/MostrousV18,DBLP:journals/pacmpl/FowlerLMD19,DBLP:journals/mscs/CapecchiGY16} use affine typing to allow sessions to be prematurely cancelled in the event of failure.
They may be used in a similar fashion to try-catch blocks, where a main protocol is followed until a possible failure is met and handled gracefully.
Similar in approach to \magpi{!}\ is work by Barwell~\etal~\cite{DBLP:conf/concur/BarwellSY022}, where generalised MPST theory is extended to reason about \emph{crash-stop failures}.
Where \magpi{!} uses timeouts, the previous uses a ``crash'' message label which can be fed to a receiving process via some assumed failure detection mechanism.
Viering~\etal~\cite{DBLP:journals/pacmpl/VieringHEZ21} present an \emph{event-driven and distributed} MPST theory, where a central robust node is assumed and is capable of restarting crashed processes.
Chen~\etal~\cite{DBLP:conf/forte/ChenVBZE16} remove the dependency on a reliable node, instead using \emph{synchronisation points} to handle failures as they are detected.
Adameit~\etal~\cite{DBLP:conf/forte/AdameitPN17} consider session types for \emph{link failures} where \emph{default values} act as failure-handling mechanism to substitute lost data.
\magpi{!} models lower-level failures than all of these works.
Most of the aforementioned assume some perfect failure-detection mechanism, whereas \magpi{!} embraces timeouts as a weak failure detector to show that some degree of safety can still be achieved.
Our theory is designed to operate at a lower level of abstraction, thus often providing weaker guarantees (\eg\ consider our minimal definition of type safety) in exchange for modeling a wider set of communication failures.

The adoption of replication in MPST theory is a novel contribution of this paper.
Replication in broader session types research has been utilised on numerous accounts~\cite{DBLP:journals/mscs/CairesPT16,DBLP:conf/fossacs/DardhaG18,DBLP:conf/esop/CairesP17,DBLP:journals/pacmpl/QianKB21},
specifically in work pertaining to Curry-Howard interpretations of linear logic as session types, where the exponential modality from linear logic $!A$ is typically linked to replication from the $\pi$-calculus.  
Disregarding our modeling of failure, the largest difference between these works and ours is that we focus on a multiparty setting, whereas these theories are all based on binary communication.
Furthermore, we did not opt to approach our problem from a logic-perspective, as is the main motivation behind this line of research.
Instead, we build upon already-standard generalised MPST theory, adapting it towards our problem domain.
We do note, however, that exploring a logical approach to replication in MPST (and, in turn, to failures in session types) is an interesting direction for future work.
A more related use of replication in types is by Marshall and Orchard~\cite{DBLP:journals/corr/abs-2203-12875}, where the authors discuss how non-linear types can be used in a controlled fashion to type behaviours such as repeatedly spawning processes.
This resembles the semantics of our type system and dynamic definition of replication in our language, where replicated processes (resp. types) can be reused as necessary to pull out linear copies of their continuations.
The mentioned work focuses on how to control the use of non-linear types and how this can be utilised with session types in a functional programming language.
Our work, on the other hand, uses replication as a means of better modeling client-server interactions and distinguishing between failure-prone communication that should be handled by the recipient or the sender.

On session types for client-server communication, research largely takes the approach of linear-logic correspondences~\cite{DBLP:journals/pacmpl/QianKB21,DBLP:conf/types/Padovani22,DBLP:conf/esop/CairesP17,DBLP:journals/jfp/Wadler14}.
The topology these works target are of binary sessions between a pool of clients and a single server.
In Qian et al. \cite{DBLP:journals/pacmpl/QianKB21} a logic is developed, called \textsf{CSLL} (client-server linear logic), utilising the \emph{coexponential} modality \textexclamdown$A$.
The subtle difference between this modality and the exponential $!A$ is that the latter represents an unlimited number of a type $A$, while the former serves type $A$ as many times as required according to client requests, in a sequential yet still unordered manner.
This is very similar to how our type system operates, given that replicated receives only pull out copies of continuations upon communication. 
Multiple requests induce non-determinism into further reductions, in our work this is seen in the extension of parallel types, which in Qian et al. \cite{DBLP:journals/pacmpl/QianKB21} is observed through hyper-environments.
The difference in goal between the work of \cite{DBLP:journals/pacmpl/QianKB21} (followed up by \cite{DBLP:conf/types/Padovani22}) and ours, is that the mentioned works focus on providing fixed static guarantees on the processes they type (the former work with a focus on deadlock freedom, the latter on weak termination) whilst we take a generalised approach.
Our type system does not force programs to be deadlock-free or terminating, but rather requires a less restrictive \emph{safety} property and allows verification of deadlock freedom and termination to be done \emph{a posteriori}---the trade-off being our weaker form of type safety given the failure-prone nature of our setting.

To conclude, we presented \magpi{!}, an extension to \magpi\ made to use replication (instead of recursion) to express infinite computation---both at the language and type levels.
We did so with the aim of better modelling multiparty client-server interactions, where servers are designed to remain infinitely available.
Specifically, we find type-level replication to be a clean mechanism for offloading the handling of certain failures from the recipient to the sender---a practical procedure for client-server interactions.
We have generalised our theory by proving our meta-theoretic results parametric of the largest safety property, allowing for more specific properties to be instantiated and used to verify runtime behaviours.
As future work, we plan to investigate more specific properties for verification through our general type system.
We aim to explore in detail the decidability of type-level properties and if/how they may be restricted to obtain decidable bounds in cases where they are not.
Lastly, we wish to conduct a foundational study of the use of replication in MPST---we anticipate their use for modelling client-server interactions to have further benefit outwith a failure-prone setting.

\subsubsection*{Acknowledgements.}
Supported by the UK EPSRC New Investigator Award grant EP/X027309/1 ``Uni-pi: safety, adaptability and resilience in distributed ecosystems, by construction''.

\clearpage

%
%
%
\bibliographystyle{splncs04}
\bibliography{mybibliography}

\clearpage

\changetext{}{10em}{-5em}{-5em}{} 
\appendix
\section{Simplification via Replication}\label{app:simplify}

We compare session types for the \emph{ping} protocol written in \magpi~\cite[Ex. 1]{DBLP:conf/esop/BrunD23} with our extension, \magpi{!}.
The example consists of three participants (or \emph{roles}): the client \p,  server \q, and a result channel \rr.
Client side communication between \p\ and \rr\ is reliable, as we assume these are processes on a single machine.
Communication with the server \q\ is \emph{unreliable}, thus failures must be handled through \emph{timeouts}.
The client-side session types are defined as $\St_\rr$ and $\St_\p$ below.
{\color{type} \[\begin{array}{l l}
        \St_\rr &= \&\{\p : \ok \then \tEnd,\ \p : \ko \then \tEnd\} \\
        \St_\p &= \o\, \q : \ping \then\!\! \&\!\! \left\{\begin{array}{l}
            \q : \pong \then \o \rr : \ok \then \tEnd, \\
            \clock \then\! \o \q : \ping \then\!\! \&\!\! \left\{\begin{array}{l}
                {\q : \pong \then \o \rr : \ok \then \tEnd}, \\
                \clock \then\! \o \q : \ping \then\!\! \&\!\! \left\{\begin{array}{l}
                    {\q : \pong \then\!\! \o \rr : \ok \then \tEnd}, \\
                    {\clock \then\! \o \rr : \ko \then \tEnd}
                \end{array}\right.
            \end{array}\right.
        \end{array}\right.
    \end{array}
\]}

\noindent
Client \p\ begins by sending a message with label \ping\ to server \q\ (\type{$\o\, \q : \ping$}) and then waits for a \pong\ response (\type{$\& \q : \pong$}).
If successful, an \ok\ message is sent to the results role \rr\ and the session is terminated for the client (\tEnd).
Since communication with the server is \emph{unreliable}, receipt of the \pong\ message is not guaranteed, and thus must be handled by a \emph{timeout} branch \type{$\clock$}.
The client attempts to reach the server 3 times---if all attempts fail, it sends a \ko\ message to \rr.
The result role \rr\ waits for either of the reliable responses from \p, thus no timeout is defined.

Now we consider the server-side protocol. 
One possible definition of the server session type could be $\St_\q$ below (as show in \cite[Ex. 1]{DBLP:conf/esop/BrunD23}).

{\color{type} \[
    \type{\St_\q} = \type{\&} \left\{\begin{array}{l}
        \p : \ping \then \o \p : \pong \then \tEnd \\
        \clock \then \type{\&} \left\{\begin{array}{l}
            \p : \ping \then \o \p : \pong \then \tEnd \\
            \clock \then \type{\&} \left\{\begin{array}{l}
                \p : \ping \then \o \p : \pong \then \tEnd\\
                \clock \then \tEnd
            \end{array}\right.
        \end{array}\right.
    \end{array}\right.
\]}

\noindent
However, although this is a \emph{safe} definition, it does not cater for \emph{all} possible scenarios.
The above definition only allows for up to three failures of the initial \ping, and not the \pong\ reply.
If any \pong\ reply were to fail, then the result would always be \ko, even though the client may try again.
If we were to also cater for three attempts from the server-side, then the server session type in \magpi\ could be defined as $\type{S_\q^\prime}$.

{\small\color{type} \[
    \type{\St_\q^\prime} = \type{\&} \left\{\begin{array}{l}
        \p : \ping \then \o \p : \pong \then \type{\&} \left\{\begin{array}{l}
            \p : \ping \then \o \p : \pong \then \type{\&} \left\{\begin{array}{l}
                \p : \ping \then \o \p : \pong \then \tEnd \\
                \clock \then \tEnd
            \end{array}\right.\\
            \clock \then \type{\&} \left\{\begin{array}{l}
                \p : \ping \then \o \p : \pong \then \tEnd \\
                \clock \then \tEnd
            \end{array}\right.
        \end{array}\right.\\
        \clock \then \type{\&} \left\{\begin{array}{l}
            \p : \ping \then \o \p : \pong \then \type{\&} \left\{\begin{array}{l}
                \p : \ping \then \o \p : \pong \then \tEnd\\
                \clock \then \tEnd
            \end{array}\right.\\
            \clock \then \type{\&} \left\{\begin{array}{l}
                \p : \ping \then \o \p : \pong \then \tEnd\\
                \clock \then \tEnd
            \end{array}\right.
        \end{array}\right.
    \end{array}\right.
\]}

The key difference here is that the server must listen for a \ping\ even after its \pong\ reply, since \pong\ could fail and the client will reattempt the request.
In practice, servers avoid these cumbersome definitions by doing away with a specific retry count.
Retries (and the handling of failures in general), is left up to the client; servers instead opt to listen for requests for as long as they are available.
Hence, no matter the direction in which a message is dropped, servers will always attempt to respond to requests (unlike $\St_\q$).
We demonstrate this approach using replication at type-level in \magpi{!}. 
\[\type{\St_\q^{!} = {!}\p : \ping \then \o \p : \pong \then \tEnd}\]

\noindent
The replicated receive \type{${!}\p : \ping$} denotes the server's constant availability to receive a \ping\ request.
Upon receiving a message from the client, the \pong\ response is issued, but the server \emph{remains available} to receive further requests.
We highlight the absence of a failure-handling timeout branch in \type{$\St_\q^{!}$}; the server does not need to change its behaviour if a client request fails.
Furthermore, if the \pong\ reply fails, the server remains available to handle any number of retries from the client.
Thus, $\St_\q^{!}$ handles all the failures covered by \type{$\St_\q^\prime$} whilst also being agnostic to the client-side implementation (\ie\ can handle any number of retries) and reducing complexity of the server-side protocol.

\section{Structural Congruence}\label{app:cong}

The full list of congruence rules used in \magpi{!} is given in \cref{def:congruence} below.
Note that, different to most standards, no congruence rules are used for replication---extracting processes out of replicated receives is treated as an irreversible action and is handled by reduction (\cref{fig:semantics}).

\begin{definition}[Structural Congruence]\label{def:congruence}
    \begin{mathpar}
    \infer{}{cong-par-sym}
        {}
        {\net_1 \| \net_2 \equiv \net_2 \| \net_1}

    \infer{}{cong-par-ass}
        {}
        {(\net_1 \| \net_2) \| \net_3 \equiv \net_1 \| (\net_2 \| \net_3)}

    \infer[\textnormal{if \p\ $\not \in \roles(\net)$}]{}{cong-par-id}
        {}
        {\net \| \p \tl \0 \equiv \net}

    \infer{}{proc-par-sym}
        {}
        {\mathcal{P}_1 \pPar \mathcal{P}_2 \equiv \mathcal{P}_2 \pPar \mathcal{P}_1}

    \infer{}{proc-par-ass}
        {}
        {(\mathcal{P}_1 \pPar \mathcal{P}_2) \pPar \mathcal{P}_3 \equiv \mathcal{P}_1 \pPar (\mathcal{P}_2 \pPar \mathcal{P}_3)}

    \infer{}{proc-par-id}
        {}
        {\mathcal{P} \pPar \0 \equiv \mathcal{P}}
\end{mathpar}
\end{definition}

\section{Subject Reduction and Session Fidelity}\label{app:proofs}

\begin{customthm}{1}[Subject Reduction]
    If $\ctxs{\Gamma}{\Delta}{\Theta} \vdash \net$ with $\safe_\R(\ctxs{\Gamma}{\Delta}{\Theta})$ and $\net \redP_\R \net^\prime$, then $\exists \Deltap,\Thetap$ s.t. $\ctxs{\Gamma}{\Delta}{\Theta}\redP^*\ctxs{\Gamma}{\Deltap}{\Thetap}$ and $\ctxs{\Gamma}{\Deltap}{\Thetap} \vdash \net^\prime$ with $\safe_\R(\ctxs{\Gamma}{\Deltap}{\Thetap})$. 
\end{customthm}

\begin{proof}[Sketch]
    The proof is by induction on the derivation of $\net \redP_\R \net^\prime$.
    For most cases, we infer the shape of relevant types in the contexts by inversion of typing rules and the assumption of $\ctxs{\Gamma}{\Delta}{\Theta} \vdash \net$.
    Any message in the network buffer being used for reduction (by sending or receiving) is present in the type buffer---recall the affine nature of type buffers, and the fact that failures only drop network level messages.
    Therefore, the contexts can match the network reduction, and we already know that the reduced contexts can type the process continuations from the typing rules.
    The last thing to show is that the reduced contexts are also safe, which is obtained via $0$ or more applications of \cref{rule:safe-r}.

    For the case where $\net$ reduces via \cref{rule:f-drop}, the network reduction is typed under the same context as the assumption, because $\Theta$ is affine and will allow for an unused message type in the type buffer. \qed
\end{proof}

\begin{customthm}{2}[Session Fidelity]
    If $\ctxs{\Gamma}{\Delta}{\Theta} \redP$ and $\ctxs{\Gamma}{\Delta}{\Theta} \vdash \net$ with $\safe_\R(\ctxs{\Gamma}{\Delta}{\Theta})$, then $ \exists \Deltap,\Thetap,\net^\prime$ s.t. $\ctxs{\Gamma}{\Delta}{\Theta} \redP \ctxs{\Gamma}{\Deltap}{\Thetap}$ and $\net \redP_\R^* \net^\prime$ and $\ctxs{\Gamma}{\Deltap}{\Thetap} \vdash \net^\prime$ with $\safe_\R(\ctxs{\Gamma}{\Deltap}{\Thetap})$. 
\end{customthm}

\begin{proof}[Sketch]
    The proof begins by induction on the derivation of $\ctxs{\Gamma}{\Delta}{\Theta} \redP$, for which we observe 4 possible cases from \cref{fig:type-lts}.
    Case \cref{rule:D-time} is straight forward. From the assumption $\ctxs{\Gamma}{\Delta}{\Theta} \vdash \net$ and rule \cref{rule:D-time}, using action $\redS{\aTime{p}}$ we may infer the shape of $\net$ to \emph{at least} contain process \p\ with a top level linear receive with a defined timeout branch.
    Therefore the context action can be matched by the network via an application of \cref{rule:f-timeout}.

    For the sending case, we infer the shape $\net$ by inversion of the typing rule \cref{rule:t-send}.
    From this we obtain a network that can match \emph{exactly one} of the paths identified in the type-level selection.
    Receiving cases are similar to each other.
    By $\safe$ we know that any receive without a timeout is reliable, thus the network is guaranteed to not drop the messages required to match the type-level reduction.
    If the receive is unreliable, then the process can always, at the very least, match the timeout action.
    The remaining case is an unreliable communication to a replicated receive, as there is no guarantee that the message used for type-level reduction still exists in the network.
    In this case, if the message towards the replicated receive has already failed, then by $\safe$, either the continuation of the replicated receive is \tEnd-typed; or there is at least one other reduction possible by the contexts which the network can match. \qed
\end{proof}

\section{Property Verification}\label{app:verification}

To demonstrate how our generalised meta-theory can verify configurable runtime properties, we 
\begin{enumerate*}[label=\textit{(\roman*)}]
    \item define two sample properties on networks (deadlock freedom and termination); 
    \item present their equivalent on type contexts; and 
    \item prove the context property implies the network property.
\end{enumerate*}

\begin{definition}[Network Properties]\label{def:proc-props}
    \begin{enumerate}
        \item A network $\net$ is deadlock free, written \textnormal{$\df(\net)$}, iff $\net \redP^* \net^\prime \not\redP$ implies either \begin{enumerate}
            \item $\net^\prime \equiv 0 \| \buf$; or
            \item $\net^\prime \equiv\ \net_1^{\prime} \| \cdots \| \net_n^\prime \| \buf$ s.t. \textnormal{$\forall i \in 1..n : \net_i^{\prime} = \role{p$_i$} \tl {!}_{j \in J} \role{q$_j$} : \m_j(\tilde{x_j}) \then P_j$}.
        \end{enumerate}
        
        \item A network $\net$ is terminating, written \textnormal{$\term(\net)$}, iff \textnormal{$\df(\net)$} and $\exists k$ finite s.t. $\forall n \geq k : \net = \net_0 \redP \net_1 \redP \cdots \redP \net_n$ either \begin{enumerate}
            \item $\net_n \equiv 0 \| \buf$; or
            \item $\net^\prime \equiv\ \net_1^{\prime} \| \cdots \| \net_n^\prime \| \buf$ s.t. \textnormal{$\forall i \in 1..n : \net_i^{\prime} = \role{p$_i$} \tl {!}_{j \in J} \role{q$_j$} : \m_j(\tilde{x_j}) \then P_j$}.
        \end{enumerate}
    \end{enumerate}
\end{definition}

Note that our definitions of deadlock freedom and termination (\cref{def:proc-props}) differ from standard definitions in the $\pi$-calculus.
Specifically, it is typical for a process to only be considered deadlock free if it only stops reducing because of reaching the inactive process.
Our definition, on the other hand, allows for \emph{leftover replicated receives}.
In the domain of client-server interactions where servers are designed to be infinitely available, a server will never reach $\0$.
This is an intended design choice of the calculus, as replicated receives are meant to model servers which \emph{do not terminate} and may occasionally \emph{get stuck} waiting in an idle state.
\Cref{def:proc-props} reflects this design choice.

\begin{definition}[Type Properties]\label{def:type-props}
    \begin{enumerate}
        \item Contexts $\ctxs{\Gamma}{\Delta}{\Theta}$ are deadlock free, written \textnormal{$\df(\ctxs{\Gamma}{\Delta}{\Theta})$}, iff $\ctxs{\Gamma}{\Delta}{\Theta} \redP^* \ctxs{\Gamma}{\Deltap}{\Thetap} \not\redP$ implies \textnormal{$\pEnd(\Deltap)$}.
        \item A context $\ctxs{\Gamma}{\Deltab{0}}{\Thetab{0}}$ is terminating, written \textnormal{$\term(\ctxs{\Gamma}{\Deltab{0}}{\Thetab{0}})$}, iff \textnormal{$\df(\ctxs{\Gamma}{\Deltab{0}}{\Thetab{0}})$} and $\exists k$ finite s.t. $\forall n \geq k : \ctxs{\Gamma}{\Deltab{0}}{\Thetab{0}} \redP \cdots \redP \ctxs{\Gamma}{\Deltab{n}}{\Thetab{n}}$ implies \textnormal{$\pEnd(\Deltab{n})$}.
    \end{enumerate}
\end{definition}

The type-level equivalents of the process properties are given in \cref{def:type-props} above.
Concretely, a context is considered \emph{deadlock free} iff it only stops reducing because its linear part is \tEnd-typed.
A context is \emph{terminating} iff it is deadlock free \emph{and} it reaches the state in which it can no longer reduce within a finite number of reduction steps.

\begin{theorem}[Property Verification: \textnormal{\df}\ and \textnormal{\term}]\label{thm:prop-ver-full}
    If $\ctxs{\Gamma}{\Delta}{\Theta} \vdash \net$ with $\safe_\R(\ctxs{\Gamma}{\Delta}{\Theta})$, then $\phi(\ctxs{\Gamma}{\Delta}{\Theta})$ implies $\phi(\net)$, for $\phi \in \{\textnormal{\df}, \textsf{\term}\}$.
\end{theorem}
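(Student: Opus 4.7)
The plan is to prove the two cases of $\phi$ separately, in each case transporting the type-level hypothesis to the network via \cref{thm:sr} and \cref{thm:sf}.

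For $\phi = \df$, suppose $\net \redP_\R^* \net^\prime \not\redP_\R$. Iterated application of \cref{thm:sr} yields $\Deltap, \Thetap$ with $\ctxs{\Gamma}{\Delta}{\Theta} \redP^* \ctxs{\Gamma}{\Deltap}{\Thetap}$, $\ctxs{\Gamma}{\Deltap}{\Thetap} \vdash \net^\prime$, and $\safe_\R(\ctxs{\Gamma}{\Deltap}{\Thetap})$. I then invoke the contrapositive of \cref{thm:sf}: inspecting its proof sketch, every genuine context reduction forces at least one matching network step, so $\net^\prime \not\redP_\R$ entails $\ctxs{\Gamma}{\Deltap}{\Thetap} \not\redP$. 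The hypothesis $\df(\ctxs{\Gamma}{\Delta}{\Theta})$ then gives $\pEnd(\Deltap)$. A final inversion on the typing derivation of $\net^\prime$ forces every role typed in $\Deltap$ to be $\0$ (by \cref{rule:t-0}), while every non-$\0$ process must be a replicated receive whose type lives in $\Gamma$ (by \cref{rule:t-bang}); combined with \cref{rule:t-par1}, \cref{rule:t-par2}, \cref{rule:t-par3}, and the congruence rule \cref{rule:cong-par-id}, this recovers one of the two clauses of \cref{def:df-nets}.

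For $\phi = \term$, the previous case already establishes the deadlock-freedom conjunct, so it remains to bound the length of any reduction trace from $\net$. I partition network steps into the five kinds of \cref{fig:semantics}. By tracking the proof of \cref{thm:sr}, each of \cref{rule:p-send}, \cref{rule:p-recv}, \cref{rule:p-bang}, and \cref{rule:f-timeout} induces \emph{exactly one} type-level step (with labels $\aOut{\cdot}{\cdot}{\cdot}$, $\aCom{\cdot}{\cdot}{\cdot}$, or $\aTime{\cdot}$), whereas \cref{rule:f-drop} induces \emph{zero}, being absorbed by the affine type buffer $\Theta$. Letting $K$ be the type-level termination bound supplied by $\term(\ctxs{\Gamma}{\Delta}{\Theta})$, the number of non-drop network steps along any trace is at most $K$; since each drop consumes a message previously enqueued by some \cref{rule:p-send}, the number of drops is similarly bounded by $K$ plus the initial buffer size. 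Every trace thus has finite length, and choosing $k$ above this bound vacuously satisfies the network definition of $\term$.

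The principal obstacle is the mismatch caused by \cref{rule:f-drop}, which is invisible at the type level: the termination bound cannot be read off directly from \cref{thm:sr} and must instead be obtained by charging each drop to the unique send that enqueued the dropped message. A secondary subtlety lies in the precise reading of \cref{thm:sf}: although its statement uses $\redP_\R^*$, the proof sketch confirms that a genuine context reduction always induces a non-empty network trace, which legitimises the ``network stuck implies context stuck'' step used above. Once these two observations are in place, the rest reduces to routine inversion on typing rules and the congruence rules of \cref{def:congruence}.
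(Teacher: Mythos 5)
Your overall strategy coincides with the paper's: both proofs hinge on transferring the type-level property to the network through \cref{thm:sf}. You run the \textsf{df} case in the opposite direction, though --- the paper starts from the contexts (``if the contexts can reduce, \cref{thm:sf} makes the network reduce; if they cannot, $\pEnd(\Delta)$ forces the network shape''), whereas you start from a stuck $\net'$, pull typing back along the trace via \cref{thm:sr}, and then try to conclude that the contexts typing $\net'$ are stuck. For \textnormal{\term}\ the paper only says ``the other case is similar''; your accounting of \cref{rule:f-drop} as the one network step with no type-level counterpart, charged to the unique enqueuing \cref{rule:p-send}, is a genuine and correct addition (modulo the small imprecision that \cref{def:type-props} bounds the length at which linear contexts become \tEnd-typed rather than the trace length itself; one still has to argue that from an \tEnd-typed $\Delta$ only finitely many \cref{rule:G-bang} steps remain, each consuming the finite $\Theta$).

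The step that does not hold as you justify it is ``$\net' \not\redP_\R$ entails $\ctxs{\Gamma}{\Deltap}{\Thetap} \not\redP$.'' It is not true that ``every genuine context reduction forces at least one matching network step'': the conclusion of \cref{thm:sf} only provides $\net \redP_\R^* \net'$, and the paper's own proof sketch of \cref{thm:sf} exhibits exactly the offending case --- a \cref{rule:G-bang} transition consuming a message type that survives in the affine $\Theta$ even though \cref{rule:f-drop} already removed the corresponding message from $\buf$. There the context reduces, the network performs zero steps, and the sketch escapes only because the pulled-out continuation is \tEnd-typed or some \emph{other} context reduction is matched. So a stuck network can sit under contexts that still reduce, and your appeal to $\df(\ctxs{\Gamma}{\Delta}{\Theta})$ to extract $\pEnd(\Deltap)$ does not fire directly. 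The repair is routine but must be said: from a stuck $\net'$, any available context reduction must be such a phantom \cref{rule:G-bang} step (the linear transitions \cref{rule:D-time}, \cref{rule:D-send}, \cref{rule:D-com} on $\Deltap$ would, by the non-replicated cases of \cref{thm:sf}, force a network step), each such step only adds an \tEnd\ entry to the linear context by safety, so iterating them reaches a stuck context whose \tEnd-typedness propagates back to $\pEnd(\Deltap)$. With that inserted, your inversion via \cref{rule:t-0}, \cref{rule:t-bang} and \cref{rule:proc-par-id} closes the case as in the paper.
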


\begin{proof}
    \Cref{thm:prop-ver-full} follows directly from \cref{thm:sf}. 
    We know that if a context can reduce, then the network can match at least one reduction.
    Hence to prove property verification, we infer the shape of a network typed under the definition of $\phi$ on types, and show that this fits the definition of $\phi$ described for networks.
    We elaborate the case of \df\ below, the other case is similar.

    From $\df(\ctxs{\Gamma}{\Delta}{\Theta})$ it follows that there are two subcases:
    \begin{enumerate}
        \item Consider $\ctxs{\Gamma}{\Delta}{\Theta}$ can reduce, \ie\ $\ctxs{\Gamma}{\Delta}{\Theta} \redP$.
        Then from $\ctxs{\Gamma}{\Delta}{\Theta} \vdash \net$ and \cref{thm:sf} we infer that $\net$ can reduce, which satisfies $\df(\net)$ trivially.

        \item Consider $\ctxs{\Gamma}{\Delta}{\Theta}$ cannot reduce, \ie\ $\ctxs{\Gamma}{\Delta}{\Theta} \not\redP$.
        Then, from the definition of \df\ we know $\pEnd(\Delta)$. 
        If the linear context is \tEnd-typed then we can infer the shape of the entire network to be \emph{at most} some composition of replicated receives with the network buffer.
        This shape of $\net$ fits the definition of $\df$ in \cref{def:proc-props}, and thus we can conclude $\df(\net)$.
    \end{enumerate}\qed
\end{proof}

\end{document}